\def\mytitle{Uncertainty relations: An operational approach to the error-disturbance tradeoff}
\DeclareMathOperator*{\minimum}{minimum}
\DeclareMathOperator*{\maximum}{maximum}
\pretocmd{\blx@head@bibintoc}{\phantomsection}{}{\ddt}
\titleformat*{\section}{\bfseries}
\titleformat*{\subsection}{\normalsize\bfseries}
\titleformat*{\subsubsection}{\bfseries}
\titleformat*{\paragraph}{\large\bfseries}
\titleformat*{\subparagraph}{\large\bfseries}
\titlespacing\section{0pt}{12pt plus 4pt minus 2pt}{2pt plus 2pt minus 2pt}
\definecolor{dullmagenta}{rgb}{0.5,0,0.5}   % #660066
\definecolor{darkblue}{rgb}{0,0,0.6}
\definecolor{lessdarkblue}{rgb}{0,0,0.85}
\definecolor{darkred}{rgb}{.9,0,0}
\definecolor{quantumviolet}{HTML}{53257F} %Quantum violet
  \definecolor{quantumgray}{HTML}{555555} %Quantum gray
\newcommand{\gnorm}[2]{\left\|#1\right\|_{#2}}
\newcommand{\cbnorm}[1]{\gnorm{#1}{\rm cb}}
\newcommand{\opnorm}[1]{\gnorm{#1}{\infty}}
\newcommand*{\half}{\frac{1}{2}}
\newcommand{\ketbra}[1]{|#1\rangle\langle #1|}
\def\tr{{\rm Tr}}
\def\eps{\varepsilon}
\def\cA{\mathcal A}
\def\cB{\mathcal B}
\def\cC{\mathcal C}
\def\cD{D}
\def\cE{\mathcal E}
\def\cF{\mathcal F}
\def\cH{\mathcal H}
\def\cI{\mathcal I}
\def\cK{\mathcal K}
\def\cM{\mathcal M}
\def\cN{\mathcal N}
\def\cP{\mathcal P}
\def\cQ{\mathcal Q}
\def\cR{\mathcal R}
\def\cT{\mathcal T}
\def\cV{V}
\def\cW{\mathcal W}
\def\sC{\mathtt C}
\def\sX{\mathsf X}
\def\sY{\mathsf Y}
\def\sZ{\mathsf Z}
\def\sQ{\mathsf Q}
\renewcommand{\rho}{\varrho}
\renewcommand{\phi}{\varphi}
\def\id{\mathbbm 1}
\renewcommand{\circ}{}
\theoremstyle{plain}
 \newmdtheoremenv[%
  outerlinewidth=0,%
  linewidth = 0.0pt,%
  innertopmargin = 0
  ]{lemma}{Lemma}
   \newmdtheoremenv[%
  outerlinewidth=0,%
  linewidth = 0.0pt,%
  innertopmargin = 0
  ]{theorem}{Theorem}
 \newmdtheoremenv[%
  outerlinewidth=0,%
  linewidth = 0.0pt,%
  innertopmargin = 0
  ]{corollary}{Corollary}
 \pgfplotsset{compat=newest}
\tikzstyle{block} = [draw,rectangle,thick,minimum height=2em,minimum width=2em]
\tikzstyle{qw} = [decorate, decoration={snake, amplitude=1mm,post length=0.7mm},thick,->]
\tikzstyle{cw} = [dashed,thick, ->]
\tikzstyle{pt} = [inner sep=0pt]
  \newlength{\@eQ}%Capital Q height
  \newlength{\@w}%line width
  \newlength{\@rl}%rounding length
  \newlength{\@cw}%character width
  \newlength{\@ch}%lower case character height
  \newlength{\@cr}%corner radius
  \newlength{\@sl}%<> slant
  \newlength{\@xt}%<> x thickeness
  \DeclareRobustCommand{\Quantum}{%
    {\sffamily%\color{quantumviolet}%
      \setlength{\@eQ}{\dimexpr\fontcharht\font`Q\relax}%
      \setlength{\@w}{0.088\@eQ}%
      \setlength{\@rl}{0.2\@eQ}%
      \setlength{\@cw}{0.5\@eQ}%
      \setlength{\@ch}{0.65\@eQ}%
      \setlength{\@cr}{0.3\@w}%
      \setlength{\@sl}{0.22\@eQ}%
      \setlength{\@xt}{1.113588507968435\@w}%=math.sqrt(1/(1-pow(22./50.,2)))*\@w
      \tikz[baseline,x=\@eQ,y=\@eQ,every node/.append style={fill=none,inner sep=0pt,outer sep=0pt,node distance=0},rounded corners=\@cr]{%
        % <> leg in back
        % \fill[] (0,0.5) -- ++(\@sl,0.5) -- ++(\@xt,0) -- ++(-\@sl,-0.5) -- ++(\@sl,-0.5) -- ++(-\@xt,0) -- cycle;
        % \fill[xshift=.95\@eQ] (0,0.5) node (eastend) {} -- ++(-\@sl,0.5) -- ++(-\@xt,0) -- ++(\@sl,-0.5) -- ++(-\@sl,-0.5) -- ++(\@xt,0) -- cycle;
        % \begin{scope}
        %   \clip[xshift=.95\@eQ,overlay] (0.5\@xt,0.5) -- ++(-\@sl,-0.5) -- (0,0) -- (0.5\@xt,0.5) ++(-2\@xt,0) -- ++(-\@sl,-0.5) -- ($(-2*\@sl,0)$) -- ++(\@sl,0.5) --cycle;
        %   \fill[xshift=.95\@eQ] (0,0) -- ++(-\@sl,0.5) -- ++(-\@xt,0) -- ++(\@sl,-0.5) -- cycle;    
        % \end{scope}
        \node[overlay,anchor=base west,opacity=0] {Q};
        % <> leg in front
        \fill[] (0,0.5) -- ++(\@sl,0.5) -- ++(\@xt,0) -- ++(-\@sl,-0.5) -- ++(\@sl,-0.5) -- ++(-\@xt,0) -- cycle;
        \begin{scope}[xshift=\@eQ]
          \fill (0,0) -- ++(-\@sl,0.5) -- ++(-\@xt,0) -- ++(\@sl,-0.5) -- cycle;    
          \clip[overlay,rounded corners=0] (-1.34\@xt,0) -- ++(-\@sl,0.5) -- ++(0,-0.5) -- cycle (0.34\@xt,0) -- ++($2*(-\@sl,0.5)$) -- (0.5\@xt,1) --cycle;
          \fill (0,0.5) node (eastend) {} -- ++(-\@sl,0.5) -- ++(-\@xt,0) -- ++(\@sl,-0.5) -- ++(-\@sl,-0.5) -- ++(\@xt,0) -- cycle;
        \end{scope}
        % u
        \def\@u##1{%
          \fill[##1] (0,\@ch) -- (0,1.06\@rl) to[out=-90,in=184,looseness=1.2,overlay] ($(\@cw-\@w,0)$) -- ($(\@cw,0)$) -- ++(0,\@ch) -- ++(-\@w,0) -- ++($(0,-\@ch) + (0,\@w)$) -- ($(\@cw-\@w,\@w)$) to[out=184,in=-90,looseness=1.1] ($(\@w,1.06\@rl)$) -- (\@w,\@ch) --cycle;}
        \@u{shift={($(current bounding box.south east)+(0.172\@eQ,0)$)}}
        \node[overlay,anchor=base east,opacity=0] at (current bounding box.south east) {u};
        % a
        \fill[shift={($(current bounding box.south east)+(0.110\@eQ,0)$)}] (\@cw,0) -- ++($(0,\@ch)-(0,\@rl)$) to[out=90,in=0,looseness=1.22] ++($(-1.22\@rl,\@rl)$) -- ($(\@w,\@ch)$) -- ++(0,-\@w) -- ($(1.22\@rl,\@ch-\@w)$) to[out=0,in=90,looseness=1.3,rounded corners=0] ($(\@cw,\@ch)-(\@w,\@rl)$) -- ($(\@cw,0)-(\@w,-\@w)$) -- 
        ($(\@rl,\@w)$) to[out=180,in=-100,looseness=0.95,rounded corners=0] ($(1.05\@w,0.33\@ch)$) to[out=80,in=176,looseness=1.0] ($(\@cw-\@w,0.47\@ch)$) to[rounded corners=0] ++(0,-\@cr) to[rounded corners=0] ++($(0,\@cr+\@w)$) to[out=176,in=0] ($(1.2\@rl,0.48\@ch)+(0,\@w)$) to[out=180,in=90,rounded corners=0] ($(0.05\@w,0.33\@ch)$) to[out=-90,in=180,looseness=1.1,rounded corners=0] ($(\@rl,0)$) -- cycle;
        \node[overlay,anchor=base east,opacity=0] at (current bounding box.south east) {a};
        % \fill[shift={($(current bounding box.south east)+(0.110\@eQ,0)$)}] (\@cw,0) -- ++($(0,\@ch)-(0,\@rl)$) to[out=90,in=0,looseness=1.22] ++($(-1.22\@rl,\@rl)$) -- ($(\@w,\@ch-0.1\@w)$) -- ++(0.1\@w,-\@w) -- ($(1.22\@rl,\@ch-\@w)$) to[out=0,in=90,looseness=1.3,rounded corners=0] ($(\@cw,\@ch)-(\@w,\@rl)$) -- ($(\@cw,0)-(\@w,-\@w)$) -- 
        % ($(\@rl,\@w)$) to[out=180,in=-100,looseness=0.95,rounded corners=0] ($(1.05\@w,0.33\@ch)$) to[out=80,in=176,looseness=1.0] ($(\@cw-\@w,0.47\@ch)$) to[rounded corners=0] ++(0,-\@cr) to[rounded corners=0] ++($(0,\@cr+\@w)$) to[out=176,in=0] ($(1.2\@rl,0.48\@ch)+(0,\@w)$) to[out=180,in=90,rounded corners=0] ($(0.05\@w,0.33\@ch)$) to[out=-90,in=180,looseness=1.1,rounded corners=0] ($(\@rl,0)$) -- cycle;
        % \node[overlay,anchor=base east,opacity=0] at (current bounding box.south east) {a};
  %n
        \fill[shift={($(current bounding box.south east)+(0.181\@eQ+\@cw,\@ch)$)},rotate=180] (0,\@ch) -- (0,\@rl) to[out=-90,in=180,looseness=1.22] ($(1.2\@rl,0)$) -- (\@cw,0) -- ++(0,\@ch) -- ++(-\@w,0) -- ++($(0,-\@ch) + (0,1.05\@w)$) to[out=182,in=-90,looseness=1.04] ($(\@w,\@rl)$) -- (\@w,\@ch) --cycle;
        \node[overlay,anchor=base east,opacity=0] at (current bounding box.south east) {n};
        % t
        \fill[shift={($(current bounding box.south east)+(0.111\@eQ,0)$)}] 
        (\@w,0.9) [rounded corners=0] -- (\@w,\@ch) [rounded corners=\@cr] -- (0,\@ch) -- ++(0,-\@w) to[rounded corners=0] ++(\@w,0) -- (\@w,\@rl) to[out=-90,in=180,looseness=1.3,overlay] ($(\@w+\@rl,-0.07\@w)$) to[overlay] ++($(0.64\@cw-\@w-\@rl,0)$) -- ++(0,\@w) -- ++($(\@w+\@rl-0.64\@cw,0)$) to[out=180,in=-90,looseness=1.4] ($(2\@w,\@rl)$) -- ++($(0,\@ch-\@rl-\@w)$) -- ($(0.64\@cw,\@ch-\@w)$) -- ++(0,\@w) to[rounded corners=0] (2\@w,\@ch) -- ($(2\@w,0.9)$) --cycle;
        \node[overlay,anchor=base east,opacity=0] at (current bounding box.south east) {t};
        % u
        \@u{shift={($(current bounding box.south east)+(0.125\@eQ,0)$)}}
        % \fill[shift={($(current bounding box.south east)+(0.125\@eQ,0)$)}] (0,\@ch) -- (0,\@rl) to[out=-90,in=180] ($(\@rl,0)$) -- (\@cw,0) -- ++(0,\@ch) -- ++(-\@w,0) -- ++($(0,-\@ch) + (0,\@w)$) -- ($(\@rl,\@w)$) to[out=180,in=-90] ($(\@w,\@rl)$) -- (\@w,\@ch) --cycle;
        \node[overlay,anchor=base east,opacity=0] at (current bounding box.south east) {u};
        % m
        \fill[shift={($(current bounding box.south east)+(0.16\@eQ+1.93*\@cw-\@w,\@ch)$)},rotate=180] (0,\@ch) -- (0,\@rl) to[out=-90,in=180,looseness=1.2] ($(1.2\@rl,0)$) -- ($(1.93\@cw,0)-(\@w,0)$) -- ++(0,\@ch) -- ++(-\@w,0) -- ++($(0,-\@ch) + (0,1.05\@w)$) to[out=184,in=-4] ($(0.97\@cw,1.05\@w)+(0,0)$) -- ++($(0,-1.05\@w)+(0,\@ch)$) -- ++($(-\@w,0)$) -- ++($(0,-\@ch)+(0,1.05\@w)$) to[out=182,in=-90,looseness=1.04] ($(\@w,\@rl)$) -- (\@w,\@ch) --cycle;
        \node[overlay,anchor=base east,opacity=0] at (current bounding box.south east) {m};
      }}}
\begin{document}

\title{\belowpdfbookmark{\mytitle}{}{\large {\bf \mytitle}}}

\author{
{\normalsize 
\href{http://orcid.org/0000-0003-2302-8025}{\color{black} Joseph M.\ Renes}\textsuperscript{1}, \href{http://orcid.org/0000-0002-3235-022X}{\color{black}Volkher B.\ Scholz}\textsuperscript{1,2}, and \href{http://orcid.org/0000-0002-0564-5436}{\color{black}Stefan Huber}\textsuperscript{1,3}}\\
\emph{\small 
\textsuperscript{1}Institute for Theoretical Physics, ETH Z\"urich, Switzerland}\\[-1mm]
\emph{\small \textsuperscript{2}Department of Physics, Ghent University, Belgium}\\[-1mm]
\emph{\small \textsuperscript{3}Department of Mathematics, Technische Universit\"at M\"unchen, Germany}
}

\def\acceptdate{{July 10, 2017}} 
\StrSubstitute{\mytitle}{ }{+}[\titleurl]
\date{\vspace{-2mm}\normalsize{{\color{quantumviolet}{\href{http://quantum-journal.org/?s=\titleurl}{\color{quantumviolet} Accepted in {\large \Quantum}\, \acceptdate }}}}\vspace{-\baselineskip}\vspace{-2mm}}
%\date{\vspace{-\baselineskip}}

\maketitle

%\myabstract{
\begin{abstract}
The notions of error and disturbance appearing in quantum uncertainty relations are often quantified by the discrepancy of a physical quantity from its ideal value. 
However, these real and ideal values are not the outcomes of simultaneous measurements, and comparing the values of unmeasured observables is not necessarily meaningful according to quantum theory. 
To overcome these conceptual difficulties, we take a different approach and define error and disturbance in an operational manner. 
In particular, we formulate both in terms of the probability that one can successfully distinguish the actual measurement device from the relevant hypothetical ideal by any experimental test whatsoever. 
This definition itself does not rely on the formalism of quantum theory, avoiding many of the conceptual difficulties of usual definitions.  
We then derive new Heisenberg-type uncertainty relations for both joint measurability and the error-disturbance tradeoff for arbitrary observables of finite-dimensional systems, as well as for the case of position and momentum. 
Our relations may be directly applied in information processing settings, for example to infer that devices which can faithfully transmit information regarding one observable do not leak any information about conjugate observables to the environment. 
We also show that Englert's wave-particle duality relation [\href{http://dx.doi.org/10.1103/PhysRevLett.77.2154}{Phys.\ Rev.\ Lett.\ {\bf 77}, 2154 (1996)}] can be viewed as an error-disturbance uncertainty relation.
\end{abstract}
\vspace{0.5\baselineskip}
%}

\section{Introduction}
It is no overstatement to say that the uncertainty principle is a cornerstone of our understanding of quantum mechanics, clearly marking the departure of quantum physics from the world of classical physics. 
Heisenberg's original formulation in 1927 mentions two facets to the principle. 
The first restricts the joint measurability of observables, stating that noncommuting observables such as position and momentum can only be simultaneously determined with a characteristic amount of indeterminacy~\cite[p.\ 172]{heisenberg_uber_1927} (see \cite[p.\ 62]{wheeler_quantum_1984} for an English translation).
%\footnote{``...es wird gezeigt, da\ss{} kanonisch konjugierte Gr\"o\ss{}en simultan nur mit einer charakteristischen Ungenauigkeit bestimmt werden k\"onnen.''\cite[p.\ 172]{heisenberg_uber_1927}} 
The second describes an error-disturbance tradeoff, noting that the more precise a measurement of one observable is made, the greater the disturbance to noncommuting observables~\cite[p.\ 175]{heisenberg_uber_1927} (\cite[p.\ 64]{wheeler_quantum_1984}). %\footnote{``Im Augenblick der Ortsbestimmung...ver\"andert das Elektron seinen Impuls unstetig. Diese \"Anderung ist um so gr\"o\ss{}er...je genauer die Ortsbestimmung ist.''\cite[p.\ 175]{heisenberg_uber_1927}}
The two are of course closely related, and Heisenberg argues for the former on the basis of the latter. 
Neither version can be taken merely as a limitation on measurement of otherwise well-defined values of position and momentum, but rather as questioning the sense in which values of two noncommuting observables can even be said to simultaneously exist. 
Unlike classical mechanics, in the framework of quantum mechanics we cannot necessarily regard unmeasured quantities as physically meaningful. 

More formal statements were constructed only much later, due to the lack of a precise mathematical description of the measurement process in quantum mechanics. 
Here we must be careful to draw a distinction between statements addressing Heisenberg's original notions of uncertainty from those, like the standard Kennard-Robertson uncertainty relation~\cite{kennard_zur_1927,robertson_uncertainty_1929}, which address the impossibility of finding a quantum state with well-defined values for noncommuting observables. 
Entropic uncertainty relations~\cite{maassen_generalized_1988,berta_uncertainty_2010} are also an example of this class; see \cite{coles_entropic_2015} for a review. 
Joint measurability has a longer history, going back at least to the seminal work of Arthurs and Kelly~\cite{arthurs_simultaneous_1965} and continuing in~\cite{she_simultaneous_1966,davies_quantum_1976,ali_systems_1977,prugovecki_fuzzy_1977,busch_indeterminacy_1985,busch_unsharp_1986,arthurs_quantum_1988,martens_towards_1991,ishikawa_uncertainty_1991,raymer_uncertainty_1994,leonhardt_uncertainty_1995,appleby_concept_1998,hall_prior_2004,werner_uncertainty_2004,ozawa_uncertainty_2004,watanabe_uncertainty_2011,busch_proof_2013,busch_heisenberg_2014,busch_measurement_2014}. %
Quantitative error-disturbance relations have only been formulated relatively recently, going back at least to Braginsky and Khalili~\cite[Chap.\ 5]{braginsky_quantum_1992} and continuing in~\cite{martens_disturbance_1992,appleby_concept_1998,ozawa_universally_2003,watanabe_quantum_2011,branciard_error-tradeoff_2013,buscemi_noise_2014,ipsen_error-disturbance_2013,coles_state-dependent_2015}.

Beyond technical difficulties in formulating uncertainty relations, there is a perhaps more difficult conceptual hurdle in that the intended consequences of the uncertainty principle seem to preclude their own straightforward formalization. 
To find a relation between, say, the error of a position measurement and its disturbance to momentum in a given experimental setup like the gamma ray microscope  would seem to require comparing the actual values of position and momentum with their supposed ideal values. 
However, according to the uncertainty principle itself, we should be wary of simultaneously ascribing well-defined values to the actual and ideal position and momentum since they do not correspond to commuting observables. 
Thus, it is not immediately clear how to formulate either meaningful measures of error and disturbance, for instance as mean-square deviations between real and ideal values, or a meaningful relation between them.\footnote{Uncertainty relations like the Kennard-Robertson bound or entropic relations do not face this issue as they do not attempt to compare actual and ideal values of the observables.} This question is the subject of much ongoing debate~\cite{ozawa_universally_2003,ozawa_uncertainty_2004-1,busch_proof_2013,busch_quantum_2014,appleby_quantum_2016,ozawa_disproving_2013}.

Without drawing any conclusions as to the ultimate success or failure of this program, in this paper we propose a completely different approach which we hope sheds new light on these conceptual difficulties.
Here, we define error and disturbance in an operational manner and ask for uncertainty 
relations that are statements about the properties of measurement devices, not of fixed experimental setups or of physical quantities themselves. 
More specifically, we define error and disturbance in terms of the \emph{distinguishing probability}, the probability that the actual behavior of the measurement apparatus can be distinguished from the relevant ideal behavior in any single experiment whatsoever.
To characterize measurement error, for example, we imagine a black box containing either the actual device or the ideal device. 
By controlling the input and observing the output we can make an informed guess as to which is the case. 
We then attribute a large measurement error to the measurement apparatus if it is easy to tell the difference, so that there is a high probability of correctly guessing, and a low error if not; of course we pick the optimal input states and output measurements for this purpose.
In this way we do not need to attribute a particular ideal value of the observable to be measured, we do not need to compare actual and ideal values themselves (nor do we necessarily even care what the possible values are), and instead we focus squarely on the properties of the device itself.
Intuitively, we might expect that calibration provides the strictest test, i.e.\ inputting states with a known value of the observable in question. 
But in fact this is not the case, as entanglement at the input can increase the distinguishability of two measurements. 
The merit of this approach is that the notion of distinguishability itself does not rely on any concepts or formalism of quantum theory, which helps avoid conceptual difficulties in formalizing the uncertainty principle.

Defining the disturbance an apparatus causes to an observable is more delicate, as an observable itself does not have a directly operational meaning (as opposed to the measurement of an observable).
But we can consider the disturbance made either to an ideal measurement of the observable or to ideal preparation of states with well-defined values of the observable. 
In all cases, the error and disturbance measures we consider are directly linked to a well-studied norm on quantum channels known as the completely bounded norm or diamond norm. 
We can then ask for bounds on the error and disturbance quantities for two given observables that every measurement apparatus must satisfy. 
In particular, we are interested in bounds depending only on the chosen observables and not the particular device. 
Any such relation is a statement about measurement devices themselves and is 
not specific to the particular experimental setup in which they are used. 
Nor are such relations statements about the values or behavior of physical quantities themselves. 
In this sense, we seek statements of the uncertainty principle akin to Kelvin's form of the second law of thermodynamics as a constraint on thermal machines, and not like Clausius's or Planck's form involving the behavior of physical quantities (heat and entropy, respectively). 
By appealing to a fundamental constraint on quantum dynamics, the continuity (in the completely bounded norm) of the Stinespring dilation~\cite{kretschmann_information-disturbance_2008,kretschmann_continuity_2008}, we find error-disturbance uncertainty relations for arbitrary observables in finite dimensions, as well as for position and momentum. 
Furthermore, we show how the relation for measurement error and measurement disturbance can be transformed into a joint-measurability uncertainty relation. 
Interestingly, we also find that Englert's wave-particle duality relation~\cite{englert_fringe_1996} can be viewed as an error-disturbance relation.

The case of position and momentum illustrates the stark difference between the kind of uncertainty statements we can make in our approach with one based on the notion of comparing real and ideal values. 
Take the notion of joint measurability, where we would like to formalize the notion that no device can accurately measure both position and momentum. 
In the latter approach one would first try to quantify the amount of position or momentum error made by a device as the discrepancy to the true value, and then show that they cannot both be small. 
The errors would be in units of position or momentum, respectively, and the hoped-for uncertainty relation would pertain to these values. 
Here, in contrast, we focus on the performance of the actual device relative to fixed ideal devices, in this case idealized separate measurements of position or momentum. 
Importantly, we need not think of the ideal measurement as having infinite precision. 
Instead, we can pick any desired precision and ask if the behavior of the actual device is essentially the same as this precision-limited ideal. 
Now the position and momentum errors do not have units of these quantities (they are unitless and always lie between zero and one), but instead \emph{depend on the desired precision}. 
Our uncertainty relation then implies that both errors cannot be small if we demand high precision in both position and momentum.  
In particular, when the product of the scales of the two precisions is small compared to Planck's constant, then the errors will be bounded away from zero (see Theorem~\ref{thm:infinitemerit} for a precise statement). 
It is certainly easier to have a small error in this sense when the demanded precision is low, and this accords nicely with the fact that sufficiently-inaccurate joint measurement is possible. 
Indeed, we find no bound on the errors for low precision. 

An advantage and indeed a separate motivation of an operational approach is that bounds involving operational quantities are often useful in analyzing information processing protocols. 
For example, entropic uncertainty relations, which like the Robertson relation characterize quantum states, have proven very useful in establishing simple proofs of the security of quantum key distribution~\cite{renes_conjectured_2009,berta_uncertainty_2010,tomamichel_uncertainty_2011,tomamichel_tight_2012,coles_entropic_2015}.
Here we show that the error-disturbance relation implies that quantum channels which can faithfully transmit information regarding one observable do not leak any information whatsoever about conjugate observables to the environment. 
This statement cannot be derived from entropic relations, as it holds for all channel inputs. 
It can be used to construct leakage-resilient classical computers from fault-tolerant quantum computers~\cite{lacerda_classical_2014}, for instance.

The remainder of the paper is structured as follows. 
In the next section we give the mathematical background necessary to state our results, and describe how the general notion of distinguishability is related to the completely bounded norm (cb norm) in this setting. 
In Section \ref{sec:definitions} we define our error and disturbance measures precisely. 
Section \ref{sec:finitedimresults} presents the error-disturbance tradeoff relations for finite dimensions, and details how joint measurability relations can be obtained from them. 
Section \ref{sec:positionmomentumresults} considers the error-disturbance tradeoff relations for position and momentum. 
Two applications of the tradeoffs are given in Section \ref{sec:app}: a formal statement of the information disturbance tradeoff for information about noncommuting observables and the connection between error-disturbance tradeoffs and Englert's wave-particle duality relations. 
In Section~\ref{sec:comparison} we compare our results to previous approaches in more detail, and finally we finish with open questions in Section~\ref{sec:openquestions}.

\section{Mathematical setup}
\subsection{Distinguishability}
The notion of the distinguishing probability is independent of the mathematical framework needed to describe quantum systems, so we give it first. 
Consider an apparatus $\cE$ which in some way transforms an input $A$ into an output $B$. 
To describe how different $\cE$ is from another such apparatus $\cE'$, we can imagine the following scenario. 
Suppose that we randomly place either $\cE$ or $\cE'$ into a black box such that we no longer have any access to the inner workings of the device, only its inputs and outputs.
Now our task is to guess which device is actually in the box by performing a single experiment, feeding in any desired input and observing the output in any manner of our choosing. 
In particular, the inputs and measurements can and should depend on $\cE$ and $\cE'$. 
The probability of making a correct guess, call it $p_{\rm dist}(\cE,\cE')$, ranges from $\frac 12$ to $1$, since we can always just make a random guess without doing any experiment on the box at all. 
Therefore it is more convenient to work with the distinguishability measure 
\begin{align}
\label{eq:distinguishdef}
\delta(\cE,\cE'):=2p_{\rm dist}(\cE,\cE')-1\,,
\end{align}
which ranges from zero (completely indistinguishable) to one (completely distinguishable). 
Later on we will show this quantity takes a specific mathematical form in quantum mechanics. 
But note that the definition implies that the distinguishability is monotonic under concatenation with a channel $\cF$ to both $\cE$ and $\cE'$, since this just restricts the possible tests. 
That is, both $\delta(\cE\circ\cF,\cE'\circ\cF)\leq \delta(\cE,\cE')$ and $\delta(\cF\circ\cE,\cF\circ \cE')\leq \delta(\cE,\cE')$ hold for all channels $\cF$ whose inputs and outputs are such that the channel concatenation is sensible. 
Here and in the remainder of the paper, we denote concatenation of channels by juxtaposition, while juxtaposition of operators denotes multiplication as usual.

\subsection{Systems, algebras, channels, and measurements}
\label{sec:systems}

In the finite-dimensional case we will be interested in two arbitrary nondegenerate observables denoted $X$ and $Z$. 
Only the eigenvectors of the observables will be relevant, call them $\ket{\phi_x}$ and $\ket{\theta_z}$, respectively. 
In infinite dimensions we will confine our analysis to position $Q$ and momentum $P$, taking $\hbar=1$. 
The analog of $Q$ and $P$ in finite dimensions are canonically conjugate observables $X$ and $Z$ for which $\ket{\phi_x}=\tfrac1{\sqrt d}\sum_z \omega^{xz}\ket{\theta_z}$, where $d$ is the dimension and $\omega$ is a primitive $d$th root of unity. 

It will be more convenient for our purposes to adopt the algebraic framework and use the Heisenberg picture, though we shall occasionally employ the Schr\"odinger picture. 
In the Heisenberg picture we describe systems chiefly by the algebra of observables on them and describe transformations of systems by quantum channels, completely positive and unital maps from the algebra of observables of the output to the observables of the input~\cite{davies_quantum_1976,kraus_states_1983,werner_quantum_2001,wolf_quantum_2012,beny_algebraic_2015}.
This allows us to treat classical and quantum systems on an equal footing within the same framework. 
When the input or output system is quantum mechanical, the observables are the bounded operators $\cB(\cH)$ from the Hilbert space $\cH$ associated with the system to itself. 
Classical systems, such as the results of measurement or inputs to a state preparation device, take values in a set, call it $\sY$.
The relevant algebra of observables here is $L^\infty(\sY)$, the (bounded, measureable) functions on $\sY$. 
Hybrid systems are described by tensor products, so an apparatus $\cE$ which measures a quantum system has an output algebra described by $L^\infty(\sY)\otimes \cB(\cH)$. 
To describe just the measurement result, we keep only $L^\infty(\sY)$. 
We shall occasionally denote the input and output spaces explicitly as $\cE_{A\to \sY B}$ when useful. 

For arbitrary input and output algebras $\cA_A$ and $\cA_B$, quantum channels are precisely those maps $\cE$ which are unital, $\cE(\id_B)=\id_A$, and completely positive, meaning that not only does $\cE$ map positive elements of $\cA_B$ to positive elements of $\cA_A$, it also maps positive elements of $\cA_B\otimes \cB(\mathbb C^n)$ to positive elements of $\cA_A\otimes \cB(\mathbb C^n)$ for all integer $n$. 
This requirement is necessary to ensure that channels act properly on entangled systems.

\begin{figure}[h]
\centering
\includegraphics{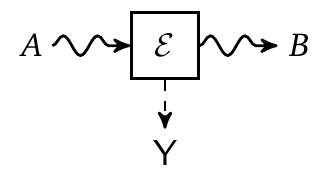}
\caption{\label{fig:apparatus} A general quantum apparatus $\cE$. The apparatus measures a quantum system $A$ giving the output $\textsf{Y}$. In so doing, $\cE$ also transforms the input $A$ into the output system $B$. Here the wavy lines denote quantum systems, the dashed lines classical systems. Formally, the apparatus is described by a quantum instrument.}
\end{figure}

A general measurement apparatus has both classical and quantum outputs, corresponding to the measurement result and the post-measurement quantum system. 
Channels describing such devices are called \emph{quantum instruments}; we will call the channel describing just the measurement outcome a \emph{measurement}. 
In finite dimensions any measurement can be seen as part of a quantum instrument, but not so for idealized position or momentum measurements, as shown in Theorem 3.3 of \cite{davies_quantum_1976} (see page 57).
Technically, we may anticipate the result since the post-measurement state of such a device would presumably be a delta function located at the value of the measurement, which is not an element of $L^2(\mathsf Q)$. 
This need not bother us, though, since it is not operationally meaningful to consider a position measurement instrument of infinite precision. 
And indeed there is no mathematical obstacle to describing finite-precision position measurement by quantum instruments, as shown in Theorem 6.1 (page 67 of \cite{davies_quantum_1976}). 
For any bounded function $\alpha\in L^2(\mathsf Q)$ we can define the instrument $\cE_\alpha:~L^\infty(\sQ)\otimes \cB(\cH)\to \cB(\cH)$ by 
\begin{align}
\label{eq:positioninstrument}
\cE_\alpha(f\otimes a)=\int\!{\rm d}q\, f(q)A_{q;\alpha}^* a A_{q;\alpha}\,,
\end{align}
where $A_{q;\alpha}\psi(q')=\alpha(q-q')\psi(q')$ for all $\psi\in L^2(\sQ)$. 
The classical output of the instrument is essentially the ideal value convolved with the function $\alpha$. 
Thus, setting the width of $\alpha$ sets the precision limit of the instrument. 

\subsection{Distinguishability as a channel norm}
The distinguishability measure is actually a norm on quantum channels, equal (apart from a factor of one half) to the so-called norm of complete boundedness, the cb norm~\cite{kitaev_quantum_1997,paulsen_completely_2003,gilchrist_distance_2005}. 
The cb norm is defined as an extension of the operator norm, similar to the extension of positivity above, as 
\begin{align}
\|T\|_{\rm cb}:=\sup_{n\in \mathbb N}\|\id_n\otimes T\|_\infty\,,
\end{align}
where $\|T\|_\infty$ is the operator norm. 
Then 
\begin{align}
\delta(\cE_1,\cE_2)=\tfrac12\|\cE_1-\cE_2\|_{\rm cb}\,.
\end{align}
In the Schr\"odinger picture we instead extend the trace norm $\|\cdot\|_1$, and the result is usually called the diamond norm~\cite{kitaev_quantum_1997,gilchrist_distance_2005}. 
In either case, the extension serves to account for entangled inputs in the experiment to test whether $\cE_1$ or $\cE_2$ is the actual channel. 
In fact, entanglement is helpful even when the channels describe projective measurements, as shown by an example given in Appendix~\ref{app:entcb}.  
This expression for the cb or diamond norm is not closed-form, as it requires an optimization. 
However, in finite dimensions the cb norm can be cast as a convex optimization, specifically as a semidefinite program~\cite{watrous_semidefinite_2009,watrous_simpler_2013}, which makes numerical computation tractable. 
Further details are given in Appendix~\ref{app:sdp}.

\subsection{The Stinespring representation and its continuity}
\label{sec:stinespring}
According to the Stinespring representation theorem~\cite{stinespring_positive_1955,paulsen_completely_2003}, any channel $\cE$ mapping an algebra $\cA$ to $\cB(\cH)$ can be expressed in terms of an isometry $V:\cH \to \cK$ to some Hilbert space $\cK$ and a representation $\pi$ of $\cA$ in $\cB(\cK)$ such that, for all $a\in\cA$,
\begin{align}
\cE(a)=V^*\pi(a)V\,.
\end{align}
The isometry in the Stinespring representation is usually called the \emph{dilation} of the channel, and $\cK$ the dilation space.
In finite-dimensional settings, calling the input $A$ and the output $B$, one usually considers maps taking $\cA=\cB(\cH_B)$ to $\cB(\cH_A)$. 
Then one can choose $\cK=\cH_B\otimes \cH_E$, where $\cH_E$ is a suitably large Hilbert space associated to the ``environment'' of the transformation ($\cH_E$ can always be chosen to have dimension $\text{dim}(\cH_A)\,\text{dim}(\cH_B)$). 
The representation $\pi$ is just $\pi(a)=a\otimes \id_E$.
Using the isometry $V$, we can also construct a channel from $\cB(\cH_E)$ to $\cB(\cH_A)$ in the same manner; this is known as the complement $\cE^\sharp$ of $\cE$. 

The advantage of the general form of the Stinespring representation is that we can easily describe measurements, possibly continuous-valued, as well. 
For the case of finite outcomes, consider the ideal projective measurement $\cQ_X$ of the observable $X$. 
Choosing a basis $\{\ket{b_x}\}$ of $L^2(\sX)$ and defining $\pi(\delta_x)=\ketbra {b_x}$ for $\delta_x$ the function taking the value 1 at $x$ and zero elsewhere, 
the canonical dilation isometry $W_X:\cH\to L^2(\sX)\otimes \cH$ is given by 
\begin{align}
\label{eq:WX}
W_X=\sum_x \ket{b_x}\otimes \ketbra {\phi_x}\,.
\end{align}
Note that this isometry defines a quantum instrument, since it can describe both the measurement outcome and the post-measurement quantum system. 
If we want to describe just the measurement result, we could simply use $W_X=\sum_x \ket{b_x}\bra{\phi_x}$ with the same $\pi$. 
More generally, a POVM with elements $\Lambda_x$ has the isometry $W_X=\sum_x \ket{b_x}\otimes \sqrt{\Lambda_x}$.

For finite-precision measurements of position or momentum, the form of the quantum instrument in \eqref{eq:positioninstrument} immediately gives a Stinespring dilation $W_Q:\cH\to \cK$ with $\cK=L^2(\sQ)\otimes \cH$ whose action is defined by 
\begin{align}
(W_Q\psi)(q,q')=\alpha(q-q')\psi(q')\,,
\end{align} 
and where $\pi$ is just pointwise multiplication on the $L^\infty(\sQ)$ factor, i.e.\ for $f\in L^\infty(\sQ)$, and $a\in \cB(\cH)$, $[\pi(f\otimes a)(\xi\otimes \psi)](q,q')=f(q)\xi(q)\cdot (a\psi)(q')$ for all $\xi\in L^2(\sQ)$ and $\psi\in \cH$.

A slight change to the isometry in \eqref{eq:WX} gives the dilation of the device which prepares the state $\ket{\varphi_x}$ for classical input $x$. 
Formally the device is described by the map $\cP:\cB(\cH)\to L^2(\sX)$ for which $\cP(\Lambda)=\sum_x \ketbra{b_x} \bra{\varphi_x}\Lambda\ket{\varphi_x}$. 
Now consider $W'_X:L^2(\sX)\to \cH \otimes L^2(\sX)$ given by 
\begin{align}
W'_X=\sum_x \ket{\varphi_x}\otimes \ketbra{b_x}\,.
\end{align}
Choosing $\pi(\Lambda)=\Lambda\otimes \id_{\sX}$, we have $\cP(\Lambda)=W_X'^* \pi(\Lambda)W_X'$.

The Stinespring representation is not unique~\cite{kretschmann_continuity_2008}.  Given two representations $(\pi_1,V_1,\cK_1)$ and $(\pi_2,V_2,\cK_2)$ of the same channel $\cE$, there exists a partial isometry $U:\cK_1\to \cK_2$ such that 
$UV_1=V_2$, $U^*V_2=V_1$, and $U\pi_1(a)=\pi_2(a)U$ for all $a\in \cA$. 
For the representations $\pi$ as usually employed for the finite-dimensional case, this last condition implies that $U$ is a partial isometry from one environment to the other, for $U (a\otimes \id_E)=(a\otimes \id_{E'})U$ can only hold for all $a$ if $U$ acts trivially on $B$.  
For channels describing measurements, finite or continuous, the last condition implies that any such $U$ is a conditional partial isometry, dependent on the outcome of the measurement result.
Thus, for any set of isometries $U_x: \cH_S\to \cH_R$, $\sum_x \ket{b_x}\otimes U_x\ketbra{\phi_x}U_x^*$ is a valid dilation of $\cQ_X$, just as is $W_X$ in \eqref{eq:WX}. 
Similarly, $(W_Q'\psi)(q,q')=\alpha(q-q')[U_q\psi](q')$ is a valid dilation of $\cE_\alpha$ in \eqref{eq:positioninstrument}.

The main technical ingredient required for our results is the continuity of the Stinespring representation in the cb norm~\cite{kretschmann_continuity_2008,kretschmann_information-disturbance_2008}. 
That is, channels which are nearly indistinguishable have Stinespring dilations which are close and vice versa. 
For completely positive and unital maps $\cE_1$ and $\cE_2$, \cite{kretschmann_continuity_2008,kretschmann_information-disturbance_2008} show that
\begin{align}
\label{eq:stinespringcont}
\tfrac12\|\cE_1-\cE_2\|_{\rm cb}\leq \inf_{\pi_i,V_i} \|V_1-V_2\|_\infty \leq \sqrt{\|\cE_1-\cE_2\|_{\rm cb}}\,,
\end{align}
where the infimum is taken over all Stinespring representations $(\pi_i,V_i,\cK_i)$ of $\cE_i$.

\subsection{Sequential and joint measurements}
Using the Stinespring representation we can easily show that, in principle, any joint measurement can always be decomposed into sequential measurement. 
\begin{lemma}
\label{lem:jointseq}
Suppose that $\cE:L^\infty(\sX)\otimes L^\infty(\sZ)\to \cB(\cH)$ is a channel describing a joint measurement. 
Then there exists an apparatus $\cA:L^\infty(\sX)\otimes \cB(\cH')\to \cB(\cH)$ and a conditional measurement $\cM:L^{\!\!\infty}(\sX)\otimes L^\infty(\sZ)\to L^\infty(\sX)\otimes \cB(\cH')$  such that $\cE=\cA\circ \cM$.
\end{lemma}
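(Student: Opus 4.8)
The plan is to read the factorization directly off a Stinespring representation of the joint measurement $\cE$. Since the output algebra $L^\infty(\sX)\otimes L^\infty(\sZ)$ is commutative, any dilation $(\pi,V,\cK)$ of $\cE$ is built from a joint spectral measure on $\cK$; in the discrete case this is a family of mutually orthogonal projections $E_{xz}$ with $\sum_{x,z}E_{xz}=\id_\cK$ and $\pi(f\otimes g)=\sum_{x,z}f(x)g(z)E_{xz}$, so that
\begin{align}
\cE(f\otimes g)=V^*\pi(f\otimes g)V=\sum_{x,z}f(x)g(z)\,V^*E_{xz}V\,.
\end{align}
The $X$-marginal $P_x:=\sum_z E_{xz}$ is itself a family of orthogonal projections resolving $\id_\cK$, and it will play the role of the intermediate $X$-readout. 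In the position/momentum setting the sums become integrals against the corresponding spectral measures, but the argument is structurally unchanged.

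First I would build the apparatus $\cA$ that reads out $X$ while keeping $\cK$ as its post-measurement quantum output. Taking $\cH'=\cK$, I define $\cA:L^\infty(\sX)\otimes\cB(\cK)\to\cB(\cH)$ through the dilation isometry $W=\sum_x\ket{b_x}\otimes P_x V$, which yields
\begin{align}
\cA(f\otimes T)=\sum_x f(x)\,V^*P_x\,T\,P_x V\,.
\end{align}
This is manifestly completely positive, and it is unital because $\sum_x P_x=\id_\cK$ and $V$ is an isometry. I would then define the conditional measurement $\cM:L^\infty(\sX)\otimes L^\infty(\sZ)\to L^\infty(\sX)\otimes\cB(\cK)$ to pass the classical value $X$ through unchanged and, conditioned on it, measure $Z$ on $\cK$ using the projections $E_{xz}$; that is, $\cM(f\otimes g)$ is the element $x\mapsto f(x)\sum_z g(z)E_{xz}$.

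The composition then collapses cleanly: using $P_xE_{xz}P_x=E_{xz}$ one finds $\cA(\cM(\delta_{x'}\otimes\delta_z))=V^*E_{x'z}V$, so that $\cA\cM=\cE$ follows by linearity. The one point requiring care — and the main obstacle — is that $\cM$ as just described fails to be unital: summing over $z$ gives $\sum_z E_{xz}=P_x\neq\id_\cK$, since the conditional $Z$-measurement only resolves the identity on the $X=x$ sector $\cK_x=P_x\cK$ rather than on all of $\cK$. I would repair this by completing the conditional measurement to a genuine one on $\cK$, for instance by assigning the defect projection $\id_\cK-P_x$ to a single reference outcome $z_0$, replacing $E_{xz_0}$ by $E_{xz_0}+(\id_\cK-P_x)$. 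This restores unitality and complete positivity of $\cM$ without altering the composition, because $\cA$ sandwiches every operator between the projections $P_x$ and $P_x(\id_\cK-P_x)P_x=0$. In the continuous-variable case the same completion is performed at the level of the conditional spectral measures, and the only additional technical burden is the bookkeeping relating the $X$-marginal spectral measure to the joint one.
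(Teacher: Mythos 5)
Your proof is correct, and it takes a genuinely different route from the paper's. The paper factors $\cE$ via dilation non-uniqueness: it takes a Stinespring dilation $V$ of $\cE$ together with a dilation $V_X:\cH\to L^2(\sX)\otimes\cH'$ of the $X$-marginal $\cM'(f)=\cE(f\otimes 1)$, observes that $V$ is \emph{also} a dilation of $\cM'$, and invokes the intertwiner property of Stinespring representations from \S\ref{sec:stinespring} to get a conditional partial isometry $U_X$ with $V=U_XV_X$; it then sets $\cA(f\otimes a)=V_X^*(\pi(f)\otimes a)V_X$ and $\cM_x(f)=U_x^*(\pi(f)\otimes\id)U_x$. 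You instead stay inside a single fixed dilation of $\cE$, use that a representation of the commutative output algebra is a joint projection-valued measure $\{E_{xz}\}$, and construct $\cA$ and $\cM$ explicitly from the marginal projections $P_x$ and the conditional projections $E_{xz}$. Your route is more elementary (no intertwiner theorem), and the two prices it pays are exactly the ones you identify and handle: the intermediate system is the whole dilation space $\cH'=\cK$ rather than a possibly smaller dilation space of the marginal, and your raw $\cM$ fails unitality because $\sum_z E_{xz}=P_x$; your completion $E_{xz_0}\mapsto E_{xz_0}+(\id_\cK-P_x)$ is a valid repair, and your check that it leaves $\cA\circ\cM$ unchanged because $P_x(\id_\cK-P_x)P_x=0$ is precisely the needed point. (Notably, the paper's proof does not get unitality of $\cM$ entirely for free either: it needs $U_x^*U_x=\id_{\cH'}$, i.e.\ that $V_X$ be chosen minimal, which is left implicit there.) Where the paper's approach buys more is the continuous case: the intertwiner statement is quoted for finite \emph{or} continuous measurements, whereas your construction additionally requires disintegrating the joint spectral measure into a measurable family of conditional $\sZ$-measurements, and the reference-outcome fix should then smear the defect against a fixed probability measure on $\sZ$ rather than a point evaluation $g(z_0)$ (which is ill-defined on a.e.-equivalence classes). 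These steps are standard but constitute genuinely extra bookkeeping, as you acknowledge.
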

\begin{proof}
Define $\cM':L^\infty(\sX)\to \cB(\cH)$ to be just the $\sX$ output of $\cE$, i.e. $\cM'(f)=\cE(f\otimes 1)$. 
Now suppose that $V:\cH\to L^2(\sX)\otimes L^2(\sZ)\otimes \cH''$ is a Stinespring representation of $\cE$ and $V_X:\cH\to L^2(\sX)\otimes \cH'$ is a representation of $\cM'$, both with the standard representation $\pi$ of $L^\infty$ into $L^2$.
By construction, $V$ is also a dilation of $\cM'$, and therefore there exists a  partial isometry $U_X$ such that $V=U_XV_X$. 
More specifically, conditional on the value $\sX=x$, each $U_x$ sends $\cH'$ to $L^2(\sZ)\otimes \cH''$.  
Thus, setting $\cA(f\otimes a)=V_X^*(\pi(f)\otimes a)V_X$ and $\cM_x(f)=U_x^*(\pi(f)\otimes \id)U_x$, we have $\cE=\cA\circ \cM$. 
\end{proof}

\section{Definitions of error and disturbance}
\label{sec:definitions}
%\subsection{Figures of merit}
\subsection{Measurement error}
%Now we can give our formal error and disturbance measures. 
To characterize the error $\eps_X$ an apparatus $\cE$ makes relative to an ideal measurement $\cQ_X$ of an observable $X$, we can simply use the distinguishability of the two channels, taking only the classical output of $\cE$. 
Suppose that the apparatus is described by the channel $\cE:\cB(\cH_B)\otimes L^\infty(\sX)\to \cB(\cH_A)$ and the ideal measurement by the channel $\cQ_X:L^\infty(\sX)\to \cB(\cH_A)$. 
To ignore the output system $B$, we make use of the partial trace map $\cT_B: L^\infty(\sX)\to \cB(\cH_B)\otimes L^\infty(\sX)$ given by $\cT_B(f)=\id_B\otimes f$. 
Then a sensible notion of error is given by $\eps_X(\cE)=\delta(\cQ_X,\cE\circ \cT_B)$.
If it is easy to tell the ideal measurement apart from the actual device, then the error is large; if it is difficult, then the error is small. 

As a general definition, though, this quantity is deficient to two respects. 
First, we could imagine an apparatus which performs an ideal $\cQ_X$ measurement, but simply mislabels the outputs. 
This leads to $\eps_X(\cE)=1$, even though the ideal measurement is actually performed. 
Second, we might wish to consider the case that the classical output set of the apparatus is not equal to $\sX$ itself. 
For instance, perhaps $\cE$ delivers much more output than is expected from $\cQ_X$. 
In this case we also formally have $\eps_X(\cE)=1$, since we can just examine the output to distinguish the two devices. 

We can remedy both of these issues by describing the apparatus by the channel $\cE:\cB(\cH_B)\otimes L^\infty(\sY)\to \cB(\cH_A)$ and just including a further classical postprocessing operation $\cR:L^\infty(\sX)\to L^\infty(\sY)$ in the distinguishability step. 
Since we are free to choose the best such map, we define
\begin{align}
\eps_X(\cE):=\inf_{\cR}\, \delta(\cQ_X,\cE\circ \cR\circ \cT_B)\,.
\end{align} 
The setup of the definition is depicted in Figure~\ref{fig:error}.
\begin{figure}[h]
\centering
\includegraphics{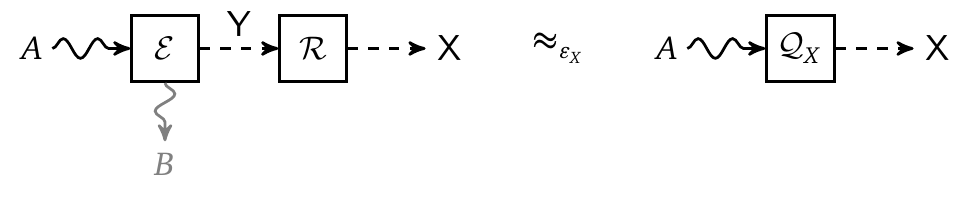}
\caption{\label{fig:error} Measurement error. 
The error made by the apparatus $\cE$ in measuring $X$ is defined by how distinguishable the actual device is from the ideal measurement $\cQ_X$ in any experiment whatsoever, after suitably processing the classical output $\sY$ of $\cE$ with the map $\cR$. 
To enable a fair comparison, we ignore the quantum output of the apparatus, indicated in the diagram by graying out $B$. 
If the actual and ideal devices are difficult to tell apart, the error is small.}
\end{figure}

\subsection{Measurement disturbance}

Defining the disturbance an apparatus $\cE$ causes to an observable, say $Z$, is more delicate, as an observable itself does not have a directly operational meaning. 
But there are two straightforward ways to proceed: we can either associate the observable with measurement or with state preparation.  
In the former, we compare how well we can mimic the ideal measurement $\cQ_Z$ of the observable after employing the apparatus $\cE$, quantifying this using measurement error as before.
Additionally, we should allow the use of recovery operations in which we attempt to ``restore'' the input state as well as possible, possibly conditional on the output of the measurement. 
Formally, let $\cQ_Z:~L^\infty(\sZ)~\to~\cB(\cH_A)$ be the ideal $Z$ measurement and $\cR$ be a recovery map $\cR:\cB(\cH_A)\to \cB(\cH_B)\otimes L^\infty(\sX)$ which acts on the output of $\cE$ conditional on the value of the classical output $\sX$ (which it then promptly forgets). 
As depicted in Figure~\ref{fig:mdist}, the measurement disturbance is then the measurement error after using the best recovery map:
\begin{align}
\label{eq:measdist}
\nu_Z(\cE):=\inf_{\cR}\delta(\cQ_Z,\cE\circ\cR\circ\cT_{\sY}\circ\cQ_Z)\,.
\end{align}

\begin{figure}[h]
\centering
\includegraphics{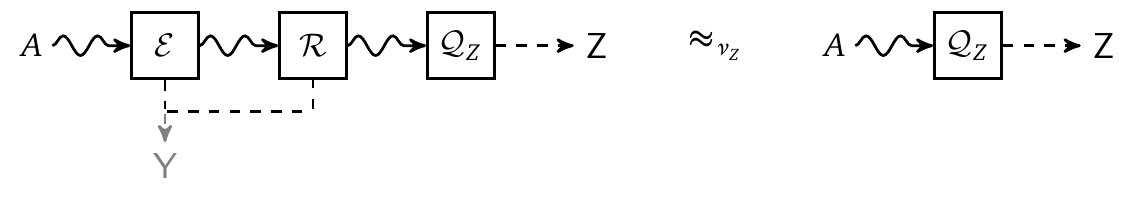}
\caption{\label{fig:mdist} Measurement disturbance. To define the disturbance imparted by an apparatus $\cE$ to the measurement of an observable $Z$, consider performing the ideal $\cQ_Z$ measurement on the output $B$ of $\cE$. First, however, it may be advantageous to ``correct'' or ``recover'' the original input $A$ by some operation $\cR$. In general, $\cR$ may depend on the output $\sX$ of $\cE$. The distinguishability between the resulting combined operation and just performing $\cQ_Z$ on the original input defines the measurement disturbance.}
\end{figure}
%Note that, for projective measurements $\cQ_Z$, we could just subsume the final $\cQ_Z$ into $\cR$. 
%we could just subsume $\cQ_Z$ into $\cR$ for projective measurements... disturbance to information vs to the ``observable''. point is that the system has to be restored, put back to its original configuration --- as in the case where it would then be passed to someone else for the measurement of $\cQ_Z$. 

\subsection{Preparation disturbance}

For state preparation, consider a device with classical input and quantum output that prepares the eigenstates of $Z$. 
We can model this by a channel $\cP_Z$, which in the Schr\"odinger picture produces $\ket{\theta_z}$ upon receiving the input $z$. 
Now we compare the action of $\cP_Z$ to the action of $\cP_Z$ followed by $\cE$, again employing a recovery operation. 
Formally, let $\cP_Z:\cB(\cH_A)\to L^\infty(\sZ)$ be the ideal $Z$ preparation device and consider recovery operations $\cR$ of the form  $\cR:\cB(\cH_A)\to \cB(\cH_B)\otimes L^\infty(\sX)$. 
Then the preparation disturbance is defined as
\begin{align}
\eta_Z(\cE):=\inf_\cR\delta(\cP_Z,\cP_Z\circ\cE\circ\cR\circ\cT_\sY)\,.
\end{align}

\begin{figure}[h]
\centering
\includegraphics{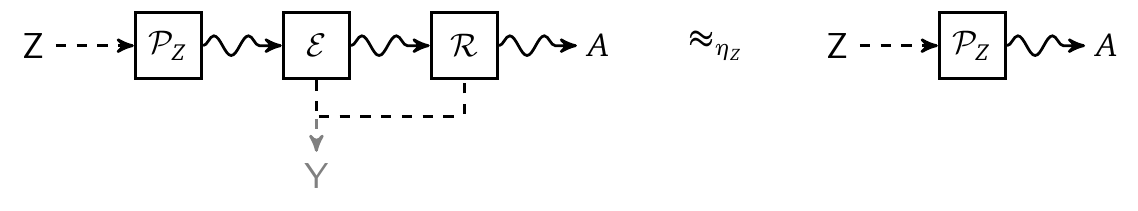}
\caption{\label{fig:pdist} Preparation disturbance. The ideal preparation device $\cP_Z$ takes a classical input $\sZ$ and creates the corresponding $Z$ eigenstate. As with measurement disturbance, the preparation disturbance is related to the distinguishability of the ideal preparation device $\cP_Z$ and $\cP_Z$ followed by the apparatus $\cE$ in question and the best possible recovery operation $\cR$.}
\end{figure}

All of the measures defined so far are ``figures of merit'', in the sense that we compare the actual device to the ideal, perfect functionality. 
In the case of state preparation we can also define a disturbance measure as a ``figure of demerit'', by comparing the actual functionality not to the best-case behavior but to the worst. 
To this end, consider a state preparation device $\cC$ which just ignores the classical input and always prepares the same fixed output state. 
These are constant (output) channels, and clearly $\cE$ disturbs the state preparation $\cP_Z$ considerably if $\cP_Z\circ \cE$ has effectively a constant output. 
Based on this intuition, we can then make the following formal definition:
\begin{align}
\widehat \eta_Z(\cE):=\tfrac{d-1}{d}-\inf_{\cC:\text{const.}}\delta(\cC,\cP_Z\circ \cE)\,. 
\end{align}
The disturbance is small according to this measure if it is easy to distinguish the action of $\cP_Z\circ\cE$ from having a constant output, and large otherwise. 
To see that $\widehat\eta_Z$ is positive, use the Schr\"odinger picture and let the output of $\cC^*$ be the state $\sigma$ for all inputs.
Then note that $\inf_\cC\delta(\cC,\cP_Z\circ\cE)=\min_\cC\max_z\delta(\sigma,\cE^*(\theta_z))$, where the latter $\delta$ is the trace distance. 
Choosing $\sigma=\frac1d\sum_z \cE^*(\theta_z)$ and using joint convexity of the trace distance, we have $\inf_\cC\delta(\cC,\cP_Z\circ\cE) \leq \tfrac{d-1}d$.

\begin{figure}[h]
\centering
\includegraphics{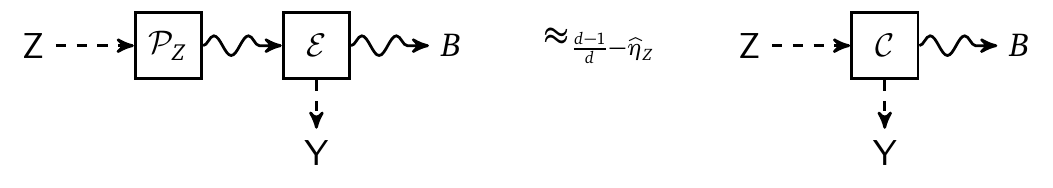}
\caption{\label{fig:pdistdemerit} Figure of ``demerit'' version of preparation  disturbance. Another approach to defining preparation disturbance is to consider distinguishability to a non-ideal device instead of an ideal device. The apparatus $\cE$ imparts a large disturbance to the preparation $\cP_Z$ if the output of the combination $\cP_Z\circ\cE$ is essentially independent of the input. Thus  we  consider the distinguishability of $\cP_Z\circ\cE$ and a constant preparation $\cC$ which outputs a fixed state regardless of the input $\sZ$.}
\end{figure}

We remark that while this disturbance measure leads to finite bounds in the case of finite dimensions, it is less well behaved in the case of position and momentum measurements: Without any bound on the energy of the test states, two channels tend to be as distinguishable as possible, unless they are already constant channels. 
To be more precise, any non-constant channel which only changes the energy by a fixed amount can be differentiated from a constant channel by inputing states of very high energy. 
Roughly speaking, even an arbitrarily strongly disturbing operation can be used to gain \emph{some} information about the input and hence a constant channel is not a good ``worst case'' scenario. 
This is in sharp contrast to the finite-dimensional case, and supports the view that the disturbance measures $\nu_Z(\cE)$ and $\eta_Z(\cE)$ are physically more sensible.

For finite-dimensional systems, all the measures of error and disturbance can be expressed as semidefinite programs, as detailed in Appendix~\ref{app:sdp}.
As an example, we compute these measures for the simple case of a nonideal $X$ measurement on a qubit; we will meet this example later in assessing the tightness of the uncertainty relations and their connection to wave-particle duality relations in the Mach-Zehnder interferometer.
Consider the ideal measurement isometry \eqref{eq:WX}, and suppose that the basis states $\ket{b_x}$ are replaced by two pure states $\ket{\gamma_x}$ which have an overlap $\braket{\gamma_0|\gamma_1}=\sin\theta$. 
Without loss of generality, we can take $\ket{\gamma_x}=\cos\frac\theta 2\ket{b_x}+\sin\frac\theta 2\ket{b_{x+1}}$. 
The optimal measurement $\cQ$ for distinguishing these two states is just projective measurement in the $\ket{b_x}$ basis, so let us consider the channel $\cE_{\text{MZ}}=\cW\circ \cQ$. 
Then, as detailed in Appendix~\ref{app:sdp}, for $Z$ canonically conjugate to $X$ we find
\begin{align}
\eps_X(\cE_{\text{MZ}})&=\tfrac12(1-\cos\theta)\qquad \text{and}\label{eq:MZerror}\\
\nu_Z(\cE_{\text{MZ}})&=\eta_Z(\cE)=\widehat\eta_Z(\cE)=\tfrac12(1-\sin\theta)\,.\label{eq:MZdisturbance}
\end{align}
In all of the figures of merit, the optimal recovery map $\cR$ is to do nothing, while in $\widehat \eta_Z$ the optimal channel $\cC$ outputs the average of the two outputs of $\cP_Z\circ\cE$.

\section{Uncertainty relations in finite dimensions}
\label{sec:finitedimresults}

\subsection{Complementarity measures}
Before turning to the uncertainty relations, we first present several measures of complementarity that will appear therein.  
Indeed, we can use the above notions of disturbance to define several measures of complementarity that will later appear in our uncertainty relations. 
For instance, we can measure the complementarity of two observables just by using the measurement disturbance $\nu$. 
Specifically, treating $\cQ_X$ as the actual measurement and $\cQ_Z$ as the ideal measurement, we define $c_M(X,Z):=\nu_Z(\cQ_X)$. 
This quantity is equivalent to $\eps_Z(\cQ_X)$ since any recovery map $\cR_{\sX\to \sZ}$ in $\eps_Z$ can be used to define $\cR_{\sX\to A}'$ in $\nu_Z$ by $\cR'=\cR\cP_Z$. 
Similarly, we could treat one observable as defining the ideal state preparation device and the other as the measurement apparatus, which leads to $c_P(X,Z):=\eta_Z(\cQ_X)$. 
Here we could also use the ``figure of demerit'' and define $\widehat c_P(X,Z):=\widehat \eta_Z(\cQ_X)$. 

Though the three complementarity measures are conceptually straightforward, it is also desireable to have closed-form expressions, particularly for the bounds in the uncertainty relations. 
To this end, we derive lower bounds as follows. 
First, consider $c_M$ and choose as inputs $Z$ basis states. 
This gives, for random choice of input,
\begin{subequations}
\begin{align}
c_M(X,Z)
&\geq \inf_\cR\,\delta(\cP_Z\cQ_Z,\cP_Z\cQ_X\cR)\\
&\geq 1-\max_{R}\tfrac1d\sum_{xz}  |\langle \varphi_x|\theta_z\rangle|^2 R_{zx}\\
&\geq 1-\max_R \tfrac1d\sum_x \max_z |\langle \varphi_x|\theta_z\rangle|^2 \sum_{z'}R_{z'x}\\
&=1- \tfrac1d\sum_x \max_z |\langle \varphi_x|\theta_z\rangle|^2\,,\label{eq:meritbound0}
\end{align}
\end{subequations}
where the maximization is over stochastic matrices $R$, and we use the fact that $\sum_z R_{zx}=1$ for all $x$. 
For $c_P$ we can proceed similarly. 
Again replacing the recovery map $\cR_{\sX\to A}$ followed by $\cQ_Z$ with a classical postprocessing map $\cR_{\sX\to \sZ}$, we have
\begin{subequations}
\begin{align}
c_P(X,Z)
&\geq \inf_{\cR_{\sX \to A}}\,\delta(\cP_Z\cQ_Z,\cP_Z\cQ_X\cR\cQ_Z)\\
&=\inf_{\cR_{\sX\to \sZ}}\,\delta(\cP_Z\cQ_Z,\cP_Z\cQ_X\cR)\\
&\geq 1-\tfrac1d\sum_x \max_z|\langle \varphi_x|\theta_z\rangle|^2\,.\label{eq:meritbound}
\end{align}
\end{subequations}

For $\widehat c_P(X,Z)$ we have
\begin{subequations}
\begin{align}
\widehat c_P(X,Z)
&=\tfrac{d-1}d-\inf_{\cC:\text{const.}}\delta(\cC,\cP_Z\circ\cQ_X)\\
&=\tfrac{d-1}d -\min_P \max_z \delta(P,\cQ_X^*(\theta_z))\\
&\geq \tfrac{d-1}d-\max_z \tfrac12\sum_x|\tfrac1d-|\langle \varphi_x|\theta_z\rangle|^2|\,,\label{eq:demeritbound}
\end{align}
\end{subequations}
where the bound comes from choosing $P$ to be the uniform distribution. 
We could also choose $P(x)=|\langle \varphi_x|\theta_{z'}\rangle|^2$ for some $z'$ to obtain the bound $\widehat c_P(X,Z)\geq \frac{d-1}d-\min_{z'}\max_z \tfrac12\sum_x\big|\tr[{\varphi_x}(\theta_z-\theta_{z'})]\big|$. 
However, from numerical investigation of random bases, it appears that this bound is rarely better than the previous one. 

Let us comment on the properties of the complementarity measures and their bounds in \eqref{eq:meritbound0}, \eqref{eq:meritbound}, and \eqref{eq:demeritbound}.  
Both expressions in the bounds are, properly, functions only of the two orthonormal bases involved, depending only on the set of overlaps. 
In particular, both are invariant under relabelling the bases. 
Uncertainty relations formulated in terms of conditional entropy typically only involve the largest overlap or largest two overlaps~\cite{coles_entropic_2015,coles_improved_2014}, but the bounds derived here are yet more sensitive to the structure of the overlaps.
Interestingly, the quantity in \eqref{eq:meritbound0} appears in the information exclusion relation of \cite{coles_improved_2014}, where the sum of mutual informations different systems can have about the observables $X$ and $Z$ is bounded by $\log_2 d\sum_x \max_z |\langle \varphi_x|\theta_z\rangle|^2$.

The complementarity measures themselves all take the same value in two extreme cases: zero in the trivial case of identical bases, $(d-1)/d$ in the case that the two bases are conjugate, meaning $|\langle \varphi_x|\theta_z\rangle|^2=1/d$ for all $x,z$. 
In between, however, the separation between the two can be quite large.  
Consider two observables that share two eigenvectors while the remainder are conjugate. 
The bounds \eqref{eq:meritbound0} and \eqref{eq:meritbound} imply that $c_M$ and $c_P$ are both greater than $(d-3)/d$. 
The bound on $\widehat c_P$ from \eqref{eq:demeritbound} is zero, though a better choice of constant channel can easily be found in this case. 
In dimensions $d=3k+2$, fix the constant channel to output the distribution $P$ with probability 1/3 of being either of the last two outputs, $1/3k$ for any $k$ of the remainder, and zero otherwise. 
Then we have $\hat c_P\geq \tfrac{d-1}d-\max_z \delta(P,\cQ_X^*\cP_Z^*(z))$.
It is easy to show the optimal value is $2/3$ so that $\hat c_P\geq(d-3)/3d$. 
Hence, in the limit of large $d$, the gap between the two measures can be at least $2/3$.
This example also shows that the gap between the complementary measures and the bounds can be large, though we will not investigate this further here.   

\subsection{Results}
We finally have all the pieces necessary to formally state our uncertainty relations. The first relates measurement error and measurement disturbance, where we have  
\begin{theorem}
\label{thm:finitemeasdist}
For any two observables $X$ and $Z$ and any quantum instrument $\cE$, 
\begin{align}
\sqrt{2\eps_X(\cE)}+\nu_Z(\cE)&\geq c_M(X,Z) \quad\text{and}\quad \label{eq:epsnu1}\\
\eps_X(\cE)+\sqrt{2\nu_Z(\cE)}&\geq c_M(Z,X)\,.\label{eq:epsnu2}
\end{align}
\end{theorem}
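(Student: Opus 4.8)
The plan is to derive both inequalities from three ingredients already in hand: the triangle inequality for the distinguishability $\delta$ (which it inherits from the cb norm via $\delta=\tfrac12\|\cdot\|_{\rm cb}$), the monotonicity of $\delta$ under appending a common channel, and the continuity of the Stinespring dilation \eqref{eq:stinespringcont}. The guiding observation is that $\eps_X$ and $\nu_Z$ each control the distinguishability of only one ``face'' of the apparatus—its postprocessed classical $X$-output, or its effect on a subsequent ideal $Z$-measurement—whereas $c_M$ refers to the full instrument. Stinespring continuity is precisely the tool that upgrades closeness of one face to closeness of an entire dilation, and it is what produces the square roots.

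For \eqref{eq:epsnu1} I would start from $c_M(X,Z)=\nu_Z(\cQ_X)$ and insert into the definition of $\nu_Z(\cQ_X)$ the recovery map $\cR$ that is optimal for $\nu_Z(\cE)$. Writing $\mathcal D_\cE$ and $\mathcal D_{\cQ_X}$ for the two $Z$-disturbance processes (apparatus, then $\cR$, then $\cQ_Z$) built from $\cE$ and from $\cQ_X$, the triangle inequality gives $c_M(X,Z)\le\delta(\cQ_Z,\mathcal D_\cE)+\delta(\mathcal D_\cE,\mathcal D_{\cQ_X})$, where the first term is exactly $\nu_Z(\cE)$. The cross term compares two processes differing only in whether $\cE$ or $\cQ_X$ is applied first, so by monotonicity under appending the common $\cR$ and $\cQ_Z$ it is bounded by the distinguishability of the instruments themselves. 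The key step is then: since the optimal postprocessed outputs satisfy $\tfrac12\|\cQ_X-\cE\cR\cT_B\|_{\rm cb}=\eps_X(\cE)$, the bound \eqref{eq:stinespringcont} supplies dilations of the two measurement channels at operator-norm distance at most $\sqrt{2\eps_X(\cE)}$; because $W_X$ simultaneously dilates the full instrument $\cQ_X$ (its post-measurement state being $\ket{\phi_x}$) and because $\delta$ is bounded by the dilation distance, the cross term is at most $\sqrt{2\eps_X(\cE)}$, giving $c_M(X,Z)\le\nu_Z(\cE)+\sqrt{2\eps_X(\cE)}$.

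For \eqref{eq:epsnu2} I would run the same scheme with the controlled and dilated faces exchanged, so the square root lands on the disturbance instead. Starting from $c_M(Z,X)=\nu_X(\cQ_Z)$, the smallness of $\nu_Z(\cE)$ means via \eqref{eq:stinespringcont} that the dilation of ``$\cE$, recover, then $\cQ_Z$'' lies within $\sqrt{2\nu_Z(\cE)}$ of the dilation $W_Z$ of $\cQ_Z$. This allows one to relate an $X$-measurement performed after $\cQ_Z$ to the genuine $X$-record held in $\cE$'s environment, at a cost of $\sqrt{2\nu_Z(\cE)}$; the error $\eps_X(\cE)$ then enters only linearly, through a final triangle-inequality step that replaces $\cE$'s $X$-record by the ideal $\cQ_X$. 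Combining yields $c_M(Z,X)\le\eps_X(\cE)+\sqrt{2\nu_Z(\cE)}$.

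The main obstacle in both cases is the bookkeeping of the dilation freedom. The instruments $\cE$ and $\cQ_X$ (resp.\ $\cQ_Z$) have different output and environment structures, and \eqref{eq:stinespringcont} only guarantees \emph{some} close pair of dilations, related to the canonical ones by a conditional partial isometry $U_x$ depending on the measurement outcome. One must check that the recovery map in the definition of the disturbance is free enough to absorb this conditional isometry—aligning $\cE$'s quantum output with the ideal post-measurement state $\ket{\phi_x}$ (resp.\ $\ket{\theta_z}$)—so that the residual discrepancy is genuinely controlled by the Stinespring distance rather than by the arbitrary choice of dilation. Verifying this alignment, and confirming that appending the common operations leaves the operator-norm dilation distance unchanged, is where the care lies; the triangle and monotonicity steps are then routine.
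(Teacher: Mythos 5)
Your proposal is correct and rests on exactly the ingredients the paper uses: triangle inequality, monotonicity of $\delta$, Stinespring continuity \eqref{eq:stinespringcont}, and the conditional-isometry dilation freedom. For \eqref{eq:epsnu1} it coincides with the paper's proof: what you call the ``alignment'' step is packaged there as Lemma~\ref{lem:FandP}, whose second part is precisely your observation that for projective $\cQ_X$ the conditional isometry can be absorbed into a preparation $\cP_{\sX\to\sY B}$ acting only on the outcome, which the recovery infimum in $\nu_Z(\cQ_X)$ then swallows; your ``cross term'' $\delta(\cE,\cQ_X\cP)\leq\sqrt{2\eps_X(\cE)}$ is exactly the paper's \eqref{eq:trianglemono}. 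For \eqref{eq:epsnu2} your packaging differs: instead of transferring the $X$-record across the near-equal dilations of $\cQ_Z$ and of ``$\cE$, recover, then $\cQ_Z$'' by hand, the paper forms the joint measurement $\cM=\cE\cR'\cR\cQ_Z$ from the two optimal maps, invokes Lemma~\ref{lem:jointseq} to rewrite it as a $\sZ$-instrument $\cE'$ followed by a conditional $\sX$-measurement, and then simply applies the already-proven \eqref{eq:epsnu1} with roles swapped, together with $\eps_Z(\cE')\leq\nu_Z(\cE)$ and $\nu_X(\cE')\leq\eps_X(\cE)$. Unfolded, the two arguments are the same—the proof of Lemma~\ref{lem:jointseq} is exactly the conditional-isometry factorization you describe—so yours is an inlined version; the paper's modularization buys reusability (the same lemma gives the joint-measurability corollary) and hides the dilation bookkeeping that you rightly flag as the delicate point. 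One detail you should make explicit rather than gesture at: having produced a conditional measurement on the post-$\cQ_Z$ system, you must convert it into a legitimate recovery-then-$\cQ_X$ process to upper-bound $c_M(Z,X)=\nu_X(\cQ_Z)$; this is done by repreparing the eigenstate $\ket{\phi_x}$ for outcome $x$ and letting $\cQ_X$ measure it, which the paper states in one line. Likewise, your claim that $\eps_X(\cE)$ enters only linearly requires choosing the dilation of the composite so that a classical copy of $\sY$ survives in the environment, so that the environment processing reproduces exactly $\cE$'s postprocessed classical output; with these two points spelled out, your chain closes and gives the stated constants.
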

Due to Lemma~\ref{lem:jointseq}, any joint measurement of two observables can be decomposed into a sequential measurement, which implies that these bounds hold for joint measurement devices as well.  
Indeed, we will make use of that lemma to derive \eqref{eq:epsnu2} from \eqref{eq:epsnu1} in the proof below. 
Of course we can replace the $c_M$ quantities with closed-form expressions using the bound in \eqref{eq:meritbound0}.
Figure~\ref{fig:plotmz} shows the bound for the case of conjugate observables of a qubit, for which $c_M(X,Z)=c_M(Z,X)=\tfrac12$. 
It also shows the particular relation between error and measurement disturbance achieved by the apparatus $\cE_{\text{MZ}}$ mentioned at the end of \S\ref{sec:definitions}, from which we can conclude the that bound is tight in the region of vanishing error or vanishing disturbance. 
\begin{figure}
\centering
\includegraphics{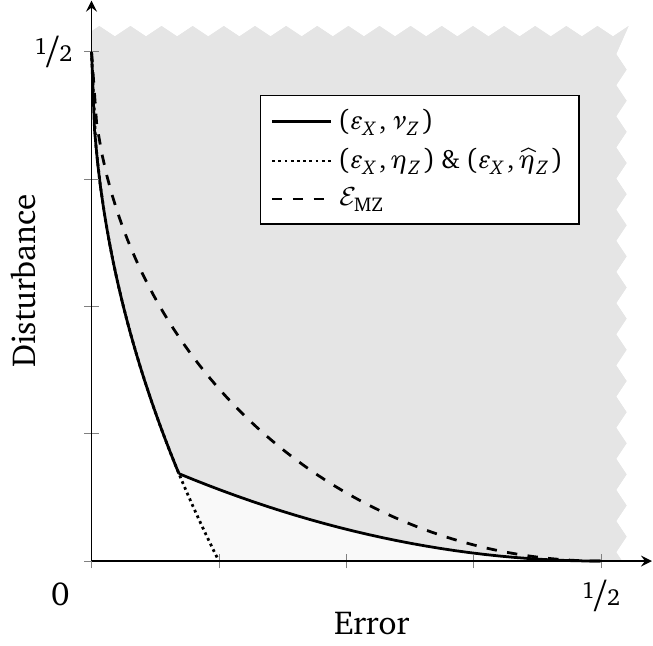}
\caption{\label{fig:plotmz} Error versus disturbance bounds for conjugate qubit observables. 
Theorem~\ref{thm:finitemeasdist} restricts the possible combinations of measurement error $\eps_X$ and measurement disturbance $\nu_Z$ to the dark gray region bounded by the solid line. 
Theorem~\ref{thm:finiteprepdist} additionally includes the light gray region. 
Also shown are the error and disturbance values achieved by $\cE_{\text{MZ}}$ from \S\ref{sec:definitions}.}
\end{figure}

For measurement error and preparation disturbance we find the following relations
\begin{theorem}
\label{thm:finiteprepdist}
For any two observables $X$ and $Z$ and any quantum instrument $\cE$, %, the following bounds holds for any quantum instrument $\cE$,
\begin{align}
\sqrt{2\eps_X(\cE)}+\eta_Z(\cE)&\geq c_P(X,Z) \quad\text{and}\quad \label{eq:epseta1}\\\sqrt{2\eps_X(\cE)}+\widehat\eta_Z(\cE)&\geq \widehat c_P(X,Z)\,.\label{eq:epseta2}
\end{align}
\end{theorem}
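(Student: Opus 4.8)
The plan is to derive both relations from three ingredients already in hand: the triangle inequality for the distinguishability $\delta$, its monotonicity under concatenation with channels, and the continuity of the Stinespring dilation \eqref{eq:stinespringcont}. I focus on \eqref{eq:epseta1}; the demerit version \eqref{eq:epseta2} runs on the same template. The conceptual point is that $\eps_X$ is a statement about \emph{measurements} while $c_P$ and $\eta_Z$ are statements about \emph{instruments} (they involve the quantum output $B$ and a recovery acting on it), so the real work is transporting the measurement-level smallness of $\eps_X$ to the instrument level, which is precisely what Stinespring continuity is for.

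First I would fix a recovery $\cR$ achieving the infimum in $\eta_Z(\cE)$, so $\eta_Z(\cE)=\delta(\cP_Z,\cP_Z\cE\cR\cT_\sY)$. Writing $\eps_X(\cE)=\tfrac12\cbnorm{\cQ_X-\cE_{\mathrm{meas}}}$ for the measurement marginal $\cE_{\mathrm{meas}}$ of $\cE$ (with the optimal classical postprocessing folded in), the continuity result provides, for the physical dilation $V_\cE$ of the full instrument $\cE$ (which is in particular a dilation of $\cE_{\mathrm{meas}}$), a dilation $W$ of $\cQ_X$ living in the same space and sharing the representation $\pi$, with $\opnorm{V_\cE-W}\leq\sqrt{2\eps_X(\cE)}$ from the upper bound in \eqref{eq:stinespringcont}.

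The crux is an invariance lemma I would isolate: the instrument $\cQ_X^{W}$ obtained from $W$ by keeping the quantum output and tracing out the environment is a dilation-induced extension of the ideal $X$ measurement, and as noted around \eqref{eq:WX} every such extension differs from the canonical (von Neumann) instrument only by a conditional partial isometry $U_x$ on the post-measurement system. Such conditional isometries can be absorbed into the recovery map, so $\eta_Z(\cQ_X^{W})=\eta_Z(\cQ_X)=c_P(X,Z)$. The triangle inequality and monotonicity then close the argument,
\begin{align}
c_P(X,Z)=\eta_Z(\cQ_X^{W})
&\leq \delta(\cP_Z,\cP_Z\cE\cR\cT_\sY)+\delta(\cP_Z\cE\cR\cT_\sY,\cP_Z\cQ_X^{W}\cR\cT_\sY)\\
&\leq \eta_Z(\cE)+\delta(\cE,\cQ_X^{W})\leq \eta_Z(\cE)+\sqrt{2\eps_X(\cE)}\,,
\end{align}
where the last step is the lower bound in \eqref{eq:stinespringcont}, $\delta(\cE,\cQ_X^{W})=\tfrac12\cbnorm{\cE-\cQ_X^{W}}\leq\opnorm{V_\cE-W}$, valid because both instruments arise from $V_\cE$ and $W$ with the same $\pi$. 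For \eqref{eq:epseta2} the offset $\tfrac{d-1}d$ appears identically in $\widehat\eta_Z$ and $\widehat c_P$ and cancels, reducing the claim to $\inf_{\cC}\delta(\cC,\cP_Z\cE)\leq\inf_{\cC}\delta(\cC,\cP_Z\cQ_X^{W})+\sqrt{2\eps_X(\cE)}$, which follows by picking the optimal constant channel for $\cQ_X^{W}$ and applying the same triangle inequality, monotonicity, and dilation bound.

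The main obstacle I anticipate is the dilation-matching together with the invariance lemma: I must justify realizing the dilation of the \emph{actual} instrument $\cE$ and a dilation of the ideal $\cQ_X$ in one common Stinespring space with one common representation, so that the lower bound of \eqref{eq:stinespringcont} applies to the instruments themselves rather than only to their measurement marginals, and I must show that replacing the von Neumann extension by the dilation-induced extension $\cQ_X^{W}$ leaves $\eta_Z$ (respectively $\widehat\eta_Z$) unchanged. It is here that the conditional-isometry structure of measurement dilations—and the freedom to absorb it into the recovery map (or, in the demerit case, into the constant channel)—does the essential work.
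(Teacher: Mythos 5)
Your proposal is correct and is essentially the paper's own argument: the paper's Lemma~\ref{lem:FandP} carries out precisely your dilation-matching step (Stinespring continuity together with the conditional-isometry freedom in measurement dilations, with the optimal classical postprocessing dilated and composed in) to obtain an instrument of the form ``ideal $\cQ_X$ first, then a conditional channel $\cF$'' within $\sqrt{2\eps_X(\cE)}$ of $\cE$, after which the triangle inequality and monotonicity close both \eqref{eq:epseta1} and \eqref{eq:epseta2} exactly as in your chains. The only cosmetic differences are that the paper absorbs the residual processing into a preparation map $\cP_{\sX\to A}$ (using projectivity of $\cQ_X$, so the bound reads $\inf_{\cP}\delta(\cP_Z,\cP_Z\cQ_X\cP)$) rather than into the recovery map (or, for the demerit case, into the constant channel), and that after tracing out the environment your $\cQ_X^{W}$ differs from the canonical instrument by a conditional \emph{channel} rather than a conditional partial isometry---a weakening that only makes your absorption step easier, since composing any such conditional channel with a recovery (or constant channel) again yields a valid recovery (or constant channel).
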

Returning to Figure~\ref{fig:plotmz} but replacing the vertical axis with $\eta_Z$ or $\widehat \eta_Z$, we now have only the upper branch of the bound, which continues to the horizontal axis as the dotted line. 
Here we can only conclude that the bounds are tight in the region of vanishing error.

\subsection{Proofs}

The proofs of all three uncertainty relations are just judicious applications of the triangle inequality, and the particular bound comes from the setting in which $\cP_Z$ meets $\cQ_X$. 
We shall make use of the fact that an instrument which has a small error in measuring $\cQ_X$ is close to one which actually employs the instrument associated with $\cQ_X$. This is encapsulated in the following
\begin{lemma}
\label{lem:FandP}
For any apparatus $\cE_{A\to \sY B}$ there exists a channel $\cF_{\sX A\to \sY B}$ such that $\delta(\cE,\cQ'_X\circ \cF)\leq \sqrt{2\eps_X(\cE)}$, where $\cQ'_X$ is a quantum instrument associated with the measurement $\cQ_X$. 
Furthermore, if $\cQ_X$ is a projective measurement, then there exists a state preparation $\cP_{\sX\to \sY B}$ such that $\delta(\cE,\cQ_X\circ \cP)\leq \sqrt{2\eps_X(\cE)}$.
\end{lemma}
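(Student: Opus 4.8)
The plan is to use nothing more than the continuity of the Stinespring dilation, \eqref{eq:stinespringcont}, together with the freedom in the choice of dilation, to show that a small error forces $\cE$ to route, up to distinguishability $\sqrt{2\eps_X(\cE)}$, through the ideal instrument. First I would fix a classical postprocessing $\cR$ attaining the infimum in the definition of $\eps_X(\cE)$ --- the infimum runs over stochastic matrices, a compact set, so in finite dimensions it is attained --- and abbreviate the associated classical-output channel by $\cN:=\cE\circ\cR\circ\cT_B:L^\infty(\sX)\to\cB(\cH_A)$, which keeps only the processed outcome and discards $B$. By construction $\half\cbnorm{\cQ_X-\cN}=\delta(\cQ_X,\cN)=\eps_X(\cE)$.

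Second, I would exhibit dilations and push the error bound through them. Let $V$ be a Stinespring dilation of the full apparatus $\cE$. Since $\cN$ arises from $\cE$ by coherently implementing the stochastic map $\cR$ on an ancilla and then relegating the ancilla together with the $B$-system to the environment, there is an isometry $J$ for which $V_\cN:=JV$ dilates $\cN$; in particular $V=J^*V_\cN$. Fixing $V_\cN$ (with the standard representation of $L^\infty(\sX)$) and applying the right-hand inequality of \eqref{eq:stinespringcont} to the pair $\cQ_X,\cN$ yields a dilation $W$ of $\cQ_X$ into the same space with $\opnorm{V_\cN-W}\le\sqrt{\cbnorm{\cQ_X-\cN}}=\sqrt{2\eps_X(\cE)}$. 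Because $W$ and the canonical instrument dilation $W_X$ of \eqref{eq:WX} both dilate the classical-output channel $\cQ_X$, the nonuniqueness of the Stinespring representation provides a conditional partial isometry $U=\sum_x\ketbra{b_x}\otimes U_x$ with $W=UW_X$. Setting $T:=J^*U$ and using $\opnorm{J^*}\le1$, I obtain $\opnorm{V-TW_X}=\opnorm{J^*V_\cN-J^*W}\le\opnorm{V_\cN-W}\le\sqrt{2\eps_X(\cE)}$. The conditional block form of $U$ is exactly what lets $T$ and $W_X$ assemble into a dilation of a channel of the shape $\cQ'_X\circ\cF$, with $\cF_{\sX A\to\sY B}$ read off from the action of $T$ on each $\sX$-sector; the left-hand inequality of \eqref{eq:stinespringcont} then gives $\delta(\cE,\cQ'_X\circ\cF)\le\opnorm{V-TW_X}\le\sqrt{2\eps_X(\cE)}$.

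Third, for the projective case I would use that a nondegenerate projective $\cQ_X$ leaves the post-measurement system in the pure state $\ketbra{\phi_x}$ fixed by the outcome alone, so that its instrument factors as the bare measurement $\cQ_X$ followed by preparation of $\ket{\phi_x}$; absorbing that preparation into $\cF$ converts it into a genuine state preparation $\cP_{\sX\to\sY B}$ with $\cQ'_X\circ\cF=\cQ_X\circ\cP$, and the same bound survives. The step I expect to be the main obstacle is the assembly in the second paragraph: since $T$ is only a contraction ($T^*T\le\id$), the prescription $a\mapsto T^*\pi_\cE(a)T$ gives a subunital completely positive map, and one must complete it to a bona fide unital channel $\cF$ --- for instance by adjoining a ``failure'' outcome --- without disturbing either the factorization through $\cQ'_X$ or the operator-norm estimate. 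Everything else is bookkeeping; in particular the decisive square-root (rather than squared) dependence on $\eps_X(\cE)$ comes precisely from invoking Stinespring continuity rather than a bare triangle inequality.
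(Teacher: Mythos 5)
Your overall strategy is the paper's: take the optimal postprocessing $\cR$, implement it coherently on top of a dilation of $\cE$, invoke Stinespring continuity to find a nearby dilation of $\cQ_X$, use the nonuniqueness theorem to put that dilation in the conditional form $UW_X$, read off $\cF$ from the blocks $U_x$, and handle the projective case by repreparing the post-measurement state from $\sX$. However, there is a genuine gap at exactly the step you flag, and it is not mere bookkeeping. By compressing with $J^*$ you produce the contraction $T=J^*U$, and $TW_X$ is then not a Stinespring dilation of any channel: the map $a\mapsto W_X^*T^*\pi(a)TW_X$ is subunital, so the left-hand inequality of \eqref{eq:stinespringcont} simply does not apply to the pair $(V,TW_X)$, and the bound $\delta(\cE,\cQ'_X\circ\cF)\leq\opnorm{V-TW_X}$ has no justification as written. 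Your proposed repair---adjoining a ``failure'' outcome---does yield a bona fide channel, but it perturbs the map by the defect $\id-W_X^*T^*TW_X$, whose norm is itself only controlled at order $\opnorm{V-TW_X}$; chasing the estimates gives a bound like $2\sqrt{2\eps_X(\cE)}$ or worse rather than $\sqrt{2\eps_X(\cE)}$. Since the constant of this lemma feeds directly into Theorems~\ref{thm:finitemeasdist} and \ref{thm:finiteprepdist}, the degradation is not cosmetic.

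The fix is to never project back down with $J^*$: the comparison should stay in the dilated space, and this is precisely how the paper proceeds. Your isometry $V_\cN=JV$ (where $J$ coherently implements $\cR$ while retaining $\sY$, $B$, $E$ as tensor factors of the environment) is, when read with $\sY\ot B$ as the output and the rest as environment, a dilation of $\cE$ itself---not only of $\cN$. Likewise $W=UW_X$, read with the same output split, is a dilation of the channel $f\ot A\mapsto W_X^*U^*(\id_\sX\ot\pi(f)\ot A\ot\id_E)UW_X$, which is exactly $\cQ'_X\circ\cF$ with $\cF$ built from the blocks $U_x$. Applying the left-hand inequality of \eqref{eq:stinespringcont} to this pair of genuine isometric dilations gives $\delta(\cE,\cQ'_X\circ\cF)\leq\opnorm{V_\cN-W}\leq\sqrt{2\eps_X(\cE)}$ with the correct constant, no failure outcome needed; in the paper this is the pair $V'V$ and $U_XW_X$. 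One residual point you should also make explicit: the nonuniqueness theorem only delivers conditional \emph{partial} isometries $U_x$, while unitality of $\cF$ requires each $U_x$ to be a genuine isometry on all of $\cH_A$; in finite dimensions this is arranged by padding the environment, which is the content of the paper's assertion that $U_X$ may be taken to be a conditional isometry. With those two corrections your argument coincides with the paper's proof, including the projective-case factorization $\cQ'_X=\cQ_X\circ\cP'$ and $\cP=\cP'\circ\cF$.
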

\begin{proof}
Let $V:\cH_A\to \cH_B\otimes\cH_E\otimes L^2(\sX)$ and $W_X:\cH_A\to L^2(\sX)\otimes \cH_A$ be respective dilations of $\cE$ and $\cQ_X$. 
Using the dilation $W_X$ we can define the instrument $\cQ'_X$ as
\begin{equation} 
\begin{aligned}
\cQ_X':\,&L^{\!\infty}(\sX)\otimes \cB(\cH_B)\to \cB(\cH_A)\\
&g\otimes A\mapsto W^*_X(\pi(g)\otimes A)W_X\,.
\end{aligned}
\end{equation}
Suppose $\cR_{\sY\to \sX}$ is the optimal map in the definition of $\eps_X(\cE)$, and let $\cR'_{\sY\to \sX\sY}$ be the extension of $\cR$ which keeps the input $\sY$; 
it has a dilation $V':L^2(\sY)\to L^2(\sY)\otimes L^2(\sX)$.
By Stinespring continuity, in finite dimensions there exists a conditional isometry $U_X:L^2(\sX)\otimes \cH_A\to L^2(\sX)\otimes L^2(\sY)\otimes \cH_B\otimes \cH_E$ such that 
\begin{align}
\opnorm{V'V-U_XW_X}\leq \sqrt{2\eps_X(\cE)}\,.
\end{align}

Now consider the map 
\begin{equation} 
\begin{aligned}
\cE':\,&L^{\!\infty}(\sY)\otimes \cB(\cH_B)\to \cB(\cH_A)\\
&f\otimes A\mapsto W^*_XU^*_X(\id_{\sX}\otimes \pi(f)\otimes A\otimes \id_E)U_XW_X\,.
\end{aligned}
\end{equation}
By the other bound in Stinespring continuity we thus have $\delta(\cE,\cE')\leq \sqrt{2\eps_X(\cE)}$.
Furthermore, as described in \S\ref{sec:stinespring}, $U_X$ is a conditional isometry, i.e.\ a collection of isometries $U_x:\cH_A\to L^2(\sY)\otimes \cH_B\otimes \cH_E$ for each measurement outcome $x$. 
Note that we may regard elements of $L^{\infty}(\sX)\otimes \cB(\cH)$ as sequences $(A_x)_{x\in \sX}$ with $A_x\in \cB(\cH)$ for all $x\in \sX$ such that $\text{ess sup}_x \opnorm{A_x}<\infty$. 
Therefore we may define
\begin{equation} 
\begin{aligned}
\cF:\,&L^{\!\infty}(\sY)\otimes \cB(\cH_B)\to L^{\!\infty}(\sX)\otimes \cB(\cH_A)\\
&f\otimes A\mapsto (U^*_x(\pi(f)\otimes A\otimes \id_E)U_x)_{x\in \sX}\,,
\end{aligned}
\end{equation}
so that $\cE'=\cQ_X'\circ \cF$. This completes the proof of the first statement. 

If $\cQ_X$ is a projective measurement, then the output $B$ of $\cQ'_X$ can just as well be prepared from the $\sX$ output. 
Describing this with the map $\cP'_{\sX\to \sX A}$ which prepares states in $A$ given the value of $\sX$ and retains $\sX$ at the output, we have $\cQ'_X=\cQ_X\circ \cP'$.
Setting $\cP=\cP'\circ \cF$ completes the proof of the second statement.  
\end{proof}

Now, to prove \eqref{eq:epsnu1}, start with the triangle inequality and monotonicity. 
Suppose $\cP_{\sX\to \sY B}$ is the state preparation map from Lemma~\ref{lem:FandP}.
Then, for any $\cR_{\sY B\to A}$, 
\begin{subequations}
\label{eq:trianglemono}
\begin{align}
\delta(\cQ_Z,\cQ_X\circ\cP\circ\cR\circ\cQ_Z) 
&\leq \delta(\cQ_Z,\cE\circ\cR\circ\cQ_Z)+\delta(\cE\circ\cR\circ\cQ_Z,\cQ_X\circ\cP\circ\cR\circ\cQ_Z)\\
&\leq \delta(\cQ_Z,\cE\circ\cR\circ\cQ_Z)+\delta(\cE,\cQ_X\circ\cP)\\
&=\delta(\cQ_Z,\cE\circ\cR\circ\cQ_Z)+\sqrt{2\eps_X(\cE)}\,.
\end{align}
\end{subequations}
Observe that $\cP\circ\cR\circ \cQ_Z$ is just a map $\cR'_{\sX\to \sZ}$. 
Taking the infimum over $\cR$ we then have
\begin{subequations}
\begin{align}
\sqrt{2\eps_X(\cE)}+\nu_Z(\cE)
&\geq \inf_\cR \,\delta(\cQ_Z,\cQ_X\circ \cP\circ\cR\circ \cQ_Z)\\
&\geq \inf_\cR \,\delta(\cQ_Z,\cQ_X\circ \cR)\,.
\end{align}
\end{subequations}

To show \eqref{eq:epsnu2}, let $\cR_{\sY B\to A}$ and $\cR'_{\sY\to \sX}$ be the optimal maps in $\nu_Z(\cE)$ and $\eps_X(\cE)$, respectively. 
Now apply Lemma~\ref{lem:jointseq} to $\cM=\cE\cR'\cR\cQ_Z$ and suppose that $\cE'_{A\to \sZ B}$ is the resulting instrument and $\cM_{\sZ B\to \sX}$ is the conditional measurement. 
By the above argument, $\sqrt{2\eps_Z(\cE')}+\nu_X(\cE')\geq \inf_\cR\,\delta(\cQ_X,\cQ_Z\cR)$.  
But $\eps_Z(\cE')\leq \delta(\cQ_Z,\cE'\cT_B)=\nu_Z(\cE)$ and $\nu_X(\cE')\leq \delta(\cQ_X,\cE'\cM)=\eps_X(\cE)$, where in the latter we use the fact that we could always reprepare an $X$ eigenstate and then let $\cQ_X$ measure it. 
Therefore the desired bound holds.\\

To establish \eqref{eq:epseta1}, we proceed just as above to obtain 
\begin{align}
\delta(\cP_Z,\cP_Z\cQ_X\cP\cR)
%&\leq \delta(\cP_Z,\cP_Z\cE\cR)+\delta(\cP_Z\cE\cR,\cP_Z\cQ_X\cP\cR)\\
&\leq \delta(\cP_Z,\cP_Z\cE\cR)+\sqrt{2\eps_X(\cE)}\,.
\end{align}
Now $\cP_{\sX\to \sY B}\cR_{\sY B\to A}$ is a preparation map $\cP_{\sX\to A}$, and taking the infimum over $\cR$ gives 
\begin{subequations}
\begin{align}
\sqrt{2\eps_X(\cE)}+\eta_Z(\cE)
&\geq \inf_{\cR}\,\delta(\cP_Z,\cP_Z\cQ_X\cP\cR)\\
&\geq \inf_{\cP}\,\delta(\cP_Z,\cP_Z\cQ_X\cP)\,.
\end{align}
\end{subequations}

Finally, \eqref{eq:epseta2}. % to \eqref{eq:error-highprepdist}. 
Since the $\widehat \eta_Z$ disturbance measure is defined ``backwards'', we start the triangle inequality with the distinguishability quantity related to disturbance, rather than the eventual constant of the bound. 
For any channel $\cC_{\sZ\to \sX}$ and $\cP_{\sX\to \sY B}$ from Lemma~\ref{lem:FandP}, just as before we have 
\begin{subequations}
\begin{align}
\delta(\cC\cP,\cP_Z\circ\cE)
&\leq \delta(\cC\cP,\cP_Z\circ\cQ_X\cP)+\delta(\cP_Z\circ\cQ_X\cP,\cP_Z\circ\cE)\\
&\leq \delta(\cC,\cP_Z\circ\cQ_X)+\sqrt{2\eps_X(\cE)}\,.
\end{align}
\end{subequations}
Now we take the infimum over constant channels $\cC_{\sZ\to \sX}$. 
Note that
\begin{align} 
\inf_{\cC_{\sZ\to \sY B}}\,\delta(\cC,\cP_Z\cE)\leq \inf_{\cC_{\sZ\to \sX}}\,\delta(\cC\cP,\cP_Z\cE)\,.
\end{align}
Therefore, we have
\begin{align}
\label{eq:breakptdemerit}
\sqrt{2\eps_X(\cE)}+\widehat \eta_Z(\cE)\geq \tfrac{d-1}{d}-\inf_\cC\delta(\cC,\cP_Z\cQ_X)\,.
\end{align}

This last proof also applies to a more general definition of disturbance which does not use $\cP_Z$ at the input, but rather diagonalizes or ``pinches'' any input quantum system in the $Z$ basis. 
Such a transformation can be thought of as the result of performing an ideal $Z$ measurement, but forgetting the result. 
More formally, letting $\cQ_Z^\natural=\cW_Z\circ \cT_\sZ$ with $\cW_Z:a\to W_Z^* aW_Z$, we can define 
\begin{align}
\label{eq:prepdisturbtilde}
\widetilde\eta_Z(\cE)=\tfrac{d-1}{d}-\inf_{\cC}\delta(\cC,\cQ_Z^\natural\circ \cE)\,.
\end{align} 
Though perhaps less conceptually appealing, this is a more general notion of disturbance, since now we can potentially use entanglement at the input to increase distinguishability of $\cQ_Z^\natural\circ \cE$ from any constant channel. 
However, due to the form of $\cQ_Z^\natural$, entanglement will not help.
Applied to any bipartite state, the map $\cQ_Z^\natural$ produces a state of the form $\sum_z p_z \ketbra{\theta_z}\otimes \sigma_z$ for some probability distribution $p_z$ and set of normalized states $\sigma_z$, and therefore the input to $\cE$ itself is again an output of $\cP_Z$. 
Since classical correlation with ancillary systems is already covered in $\widehat\eta_Z(\cE)$, it follows that $\widetilde\eta_Z(\cE)=\widehat\eta_Z(\cE)$.

\section{Position \& momentum}
\label{sec:positionmomentumresults}

\subsection{Gaussian precision-limited measurement and preparation}
Now we turn to the infinite-dimensional case of position and momentum measurements. 
Let us focus on Gaussian limits on precision, where the convolution function $\alpha$ described in \S\ref{sec:systems} is the square root of a normalized Gaussian of width $\sigma$, and for convenience define 
\begin{align}
g_{\sigma}(x)=\frac1{\sqrt{2\pi}\sigma}e^{-\tfrac {x^2}{2\sigma^2}}\,.
\end{align}
One advantage of the Gaussian choice is that the Stinespring dilation of the ideal $\sigma$-limited measurement device is just a canonical transformation. 
Thus, measurement of position $Q$ just amounts to adding this value to an ancillary system which is prepared in a zero-mean Gaussian state with position standard deviation $\sigma_Q$, and similarly for momentum. 
The same interpretation is available for precision-limited state preparation. 
To prepare a momentum state of width $\sigma_P$, we begin with a system in a zero-mean Gaussian state with momentum standard deviation $\sigma_P$ and simply shift the momentum by the desired amount. 
%Let us denote the ideal measurement instruments $\cQ'_Q$ and $\cQ'_P$, which are associated with precisions $\sigma_Q$ and $\sigma_P$, respectively. 
%Similarly, $\cP_Q$ and $\cP_P$ are the ideal position and momentum preparation devices, with the same precisions. 

Given the ideal devices, the definitions of error and disturbance are those of \S\ref{sec:definitions}, as in the finite-dimensional case, with the slight change that the first term of $\widehat \eta$ is now 1.
To reduce clutter, we do not indicate $\sigma_Q$ and $\sigma_P$ specifically in the error and disturbance functions themselves. %shall associate precision $\sigma_Q$ with position and $\sigma_P$ with momentum, but not indicate these quantities specifically in the error and disturbance functions themselves. 

%To reduce clutter, we shall associate width $\sigma_Q$ with position and width $\sigma_P$ with momentum, but not indicate these quantities specifically in the error and disturbance themselves. 
Since our error and disturbance measures are based on possible state preparations and measurements in order to best distinguish the two devices, in principle one ought to consider precision limits in the distinguishability quantity $\delta$. 
However, we will not follow this approach here, and instead we allow test of arbitrary precision in order to preserve the link between distinguishability and the cb norm. 
This leads to bounds that are perhaps overly pessimistic, but nevertheless limit the possible performance of any device. 

\subsection{Results}
As discussed previously, the disturbance measure of demerit $\widehat \eta$ cannot be expected to lead to uncertainty relations for position and momentum observables, as any non-constant channel can be perfectly differentiated from a constant one by inputting states of arbitrarily high momentum. We thus focus on the disturbance measures of merit.

\begin{theorem}
\label{thm:infinitemerit}
Set $c=2\sigma_Q\sigma_P$ for any precision values $\sigma_Q,\sigma_P>0$. 
Then for any quantum instrument $\cE$,
\begin{align}
\left.
 \begin{array}{c}
\sqrt{2\eps_{Q}(\cE)}+\nu_P(\cE)\\[2mm] \eps_Q(\cE)+\sqrt{2\nu_{Q}(\cE)}\end{array}\right\} &\geq \frac{1-c^2}{(1+c^{2/3}+c^{4/3})^{3/2}}\quad \text{and}\label{eq:infiniteerror-lowmeasdist}\\
\sqrt{2\eps_{Q}(\cE)}+\eta_P(\cE) 
&\geq \frac{(1+c^2)^{1/2}}{((1+c^2)+c^{2/3}(1+c^2)^{2/3}+c^{4/3}(1+c^2)^{1/3})^{3/2}}\,.\label{eq:infiniteerror-lowprepdist}
\end{align}
\end{theorem}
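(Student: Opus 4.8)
The plan is to mirror the finite-dimensional proofs of Theorems~\ref{thm:finitemeasdist} and~\ref{thm:finiteprepdist}, which reduce via the triangle inequality and Lemma~\ref{lem:FandP} to evaluating the complementarity quantity in which $\cP_P$ (or $\cP_Q$) meets $\cQ_Q$. The abstract triangle-inequality arguments in the ``Proofs'' subsection never used finiteness of the dimension except through Lemma~\ref{lem:FandP}; as long as the continuity of the Stinespring dilation \eqref{eq:stinespringcont} holds (which it does for the precision-limited instruments $\cE_\alpha$ of \eqref{eq:positioninstrument}, since these are genuine quantum instruments with explicit dilations), the same manipulations yield
\begin{align*}
\sqrt{2\eps_Q(\cE)}+\nu_P(\cE)\;\geq\;\inf_\cR\,\delta(\cQ_P,\cQ_Q\circ\cR)\,,
\end{align*}
together with the analogous lower bounds for the swapped relation and for the preparation-disturbance relation. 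So the real content of the theorem is the evaluation of these residual distinguishabilities --- the position-momentum analogues of $c_M(Q,P)$ and $c_P(Q,P)$ --- as the explicit functions of $c=2\sigma_Q\sigma_P$ on the right-hand sides.

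First I would verify that Lemma~\ref{lem:FandP} transfers to the Gaussian setting. The key point noted in \S\ref{sec:stinespring} is that the $\sigma_Q$-limited position measurement has an honest Stinespring dilation $W_Q$ with $\pi$ acting by pointwise multiplication, and that conditional-isometry freedom $(W_Q'\psi)(q,q')=\alpha(q-q')[U_q\psi](q')$ describes all dilations; I would invoke Stinespring continuity to produce, for an apparatus with small $\eps_Q$, a nearby channel of the form $\cQ_Q'\circ\cF$, exactly as in the finite-dimensional lemma. Because the Gaussian measurement admits the ``add to an ancilla'' canonical-transformation picture, the projective-case second statement of the lemma also goes through, giving a preparation $\cP$ with $\delta(\cE,\cQ_Q\circ\cP)\leq\sqrt{2\eps_Q(\cE)}$.

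The main obstacle is the explicit evaluation of $\inf_\cR\delta(\cQ_P,\cQ_Q\circ\cR)$ and its preparation analogue. The strategy is to feed in the optimal Gaussian test states and reduce the diamond-norm distinguishability to a concrete two-Gaussian (or Gaussian-versus-recovered-Gaussian) comparison, where the residual freedom is a classical recovery channel $\cR$ acting on the continuous measurement record. Using the canonical-transformation form of $\cQ_Q$ and $\cQ_P$ and choosing the recovery to be the best Gaussian shift, this becomes an optimization over the overlap of two Gaussian wavepackets whose widths are controlled by $\sigma_Q$, $\sigma_P$, and Planck's constant (here set to $1$). The peculiar exponents $2/3,\,4/3$ and the power $3/2$ in the stated bounds strongly suggest that the optimal test state is itself Gaussian with a variance tuned by a stationarity condition, and that maximizing the distinguishing overlap over this width yields an expression whose optimizer introduces the cube-root structure. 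I expect the calculation to proceed by writing the relevant fidelity/overlap of two Gaussians as a function of a single squeezing parameter $t$, differentiating, and finding the optimal $t$ in terms of $c$; substituting back produces the closed forms in \eqref{eq:infiniteerror-lowmeasdist} and \eqref{eq:infiniteerror-lowprepdist}.

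Finally, I would obtain the swapped relation $\eps_Q(\cE)+\sqrt{2\nu_Q(\cE)}$ exactly as \eqref{eq:epsnu2} was derived from \eqref{eq:epsnu1}, by applying the sequential-decomposition Lemma~\ref{lem:jointseq} to the composite channel and using $\eps_P(\cE')\leq\nu_P(\cE)$ and $\nu_Q(\cE')\leq\eps_Q(\cE)$ with the repreparation argument; here one must check that Lemma~\ref{lem:jointseq} is still applicable when the measurement outcomes are continuous, which is why its proof was phrased in terms of conditional partial isometries $U_x$ rather than finite sums. The preparation-disturbance bound \eqref{eq:infiniteerror-lowprepdist} differs from the measurement-disturbance bound only through the extra $(1+c^2)$ factors, which I expect to arise because comparing against $\cP_P$ rather than $\cQ_P$ effectively convolves in one more Gaussian of variance set by the preparation width, shifting the relevant widths by exactly this factor; tracking that bookkeeping carefully is the one place where an otherwise routine calculation could go astray.
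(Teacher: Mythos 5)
Your scaffolding is right as far as it goes — the triangle-inequality reduction does carry over, with the caveat (which the paper handles by working with a sequence $(\cF_n)$) that in infinite dimensions the infimum in Stinespring continuity need not be attained — but there are two genuine gaps. First, the second statement of Lemma~\ref{lem:FandP} does \emph{not} transfer: the $\sigma_Q$-limited position measurement is not projective, and the post-measurement state $A_{q;\alpha}\psi$ of the instrument \eqref{eq:positioninstrument} retains quantum information about the input beyond the record $q$, so it cannot be reprepared from the classical outcome. Consequently your claimed residual quantity $\inf_\cR\delta(\cQ_P,\cQ_Q\circ\cR)$ with $\cR$ acting only on the measurement record is not a valid lower bound for $\sqrt{2\eps_Q(\cE)}+\nu_P(\cE)$; restricting the simulator's recovery to classical postprocessing only \emph{increases} the infimum, so the inequality would run the wrong way. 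The paper instead keeps the quantum output in play, forming the joint measurement $\cN=\cQ_Q'\cF_n\cR\cQ_P$ and bounding $\delta(\cQ_P,\cN\cT_\sQ)$, where by construction the $\sQ$-marginal of $\cN$ is exactly the $\sigma_Q$-limited position measurement.

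Second, and more centrally, your evaluation of the residual distinguishability is missing its key mechanism. You propose restricting the recovery to ``the best Gaussian shift'' and computing a two-Gaussian overlap, but nothing justifies that restriction: an arbitrary (possibly nonlinear, quantum-assisted) recovery must be ruled out. The paper does this by a covariance argument: since $\cQ_P$ is covariant under phase-space translations, one may average $\cN\cT_\sQ$ over translates (the noncompact averaging following Werner) and assume $\cN$ is covariant, hence of the form $\cN(f)=\int \tfrac{\mathrm{d}z}{2\pi}\,f(z)V_z m V_z^*$ for a positive trace-one $m$. Because the position marginal is pinned to $\cQ_Q$, the position width of $m$ is $\sigma_Q$, and \emph{Kennard's uncertainty relation applied to $m$} forces its momentum width to satisfy $\widehat\sigma_P\geq 1/(2\sigma_Q)$ — this is where the Heisenberg limit $c=2\sigma_Q\sigma_P$ enters, and it is absent from your argument. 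Only after this reduction do your (correct) guesses kick in: Gaussian test state $\psi=g_{\sigma_\psi}^{1/2}$ with $\sigma_\psi\to 0$ and test function $f=\sqrt{2\pi}\sigma_f g_{\sigma_f}$ give a difference of two terms of the form $\sigma_f/\sqrt{\sigma_f^2+\sigma^2+\sigma_\psi^2}$, and optimizing $\sigma_f$ via \eqref{eq:optimalsf} produces the $c^{2/3}$ structure. Your bookkeeping intuition for \eqref{eq:infiniteerror-lowprepdist} is essentially right in effect — the input is now pinned to a width-$\sigma_P$ Gaussian rather than arbitrarily narrow, which amounts to replacing $c$ by $c/\sqrt{1+c^2}$ — and your route to the swapped relation via Lemma~\ref{lem:jointseq} matches the paper's; but without the covariance-plus-Kennard step the proof of both displayed bounds is incomplete.
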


Before proceeding to the proofs, let us comment on the properties of the two bounds.
As can be seen in Figure~\ref{fig:plotbounds}, the bounds take essentially the same values for $\sigma_Q\sigma_P\ll \tfrac12$, and indeed both evaluate to unity at $\sigma_Q\sigma_P=0$. 
This is the region of combined position and momentum precision far smaller than the natural scale set by $\hbar$, and the limit of infinite precision accords with the finite-dimensional bounds for conjugate observables in the limit $d\to \infty$. 
Otherwise, though, the bounds differ remarkably. 
The measurement disturbance bound in \eqref{eq:infiniteerror-lowmeasdist} is positive only when $\sigma_Q\sigma_P\leq \tfrac 12$, which is the Heisenberg precision limit. 
In contrast, the preparation disturbance bound in \eqref{eq:infiniteerror-lowprepdist} is always positive, though it decays roughly as $(\sigma_Q\sigma_P)^2$.

The distinction between these two cases is a result of allowing arbitrarily precise measurements in the distinguishability measure. 
It can be understood by the following heuristic argument. 
%rewrite to say: even if we input an arbitrarily precise momentum state, the position measurement will kick the momentum by 1/\sigma_Q and therefore we cannot detect the change in momentum for resolutions above 1/\sigma_Q.
Consider an experiment in which a momentum state of width $\sigma_P^{\text{in}}$ is subjected to a position measurement of resolution $\sigma_Q$ and then a momentum measurement of resolution $\sigma_P^{\text{out}}$. 
From the uncertainty principle, we expect the position measurement to change the momentum by an amount $\sim 1/\sigma_Q$. 
Thus, to reliably detect the change in momentum, $\sigma_P^{\text{out}}$ must fulfill the condition $\sigma_P^{\text{out}}\ll \sigma_{P}^{\text{in}}+1/\sigma_Q$. 
The Heisenberg limit in the measurement disturbance scenario is $\sigma_P^{\text{out}}=2/\sigma_Q$, meaning this condition cannot be met no matter how small we choose $\sigma_P^{\text {in}}$. 
%When $\sigma_Q\sigma_P\geq \tfrac12$, the outcomes of $\sigma_P$-limited momentum measurements will presumably hardly differ whether $\sigma_Q$-limited position measurement is previously performed or not. 
This is consistent with no nontrivial bound in \eqref{eq:infiniteerror-lowmeasdist} in this region.  
On the other hand, for preparation disturbance the Heisenberg limit is $\sigma_P^{\text{in}}=2/\sigma_Q$, so detecting the change in momentum simply requires $\sigma_P^{\text{out}}\ll 1/\sigma_Q$. 
A more satisfying approach would be to include the precision limitation in the distinguishability measure to restore the symmetry of the two scenarios, but this requires significant changes to the proof and is left for future work.

\begin{figure}[h]
\centering
\includegraphics{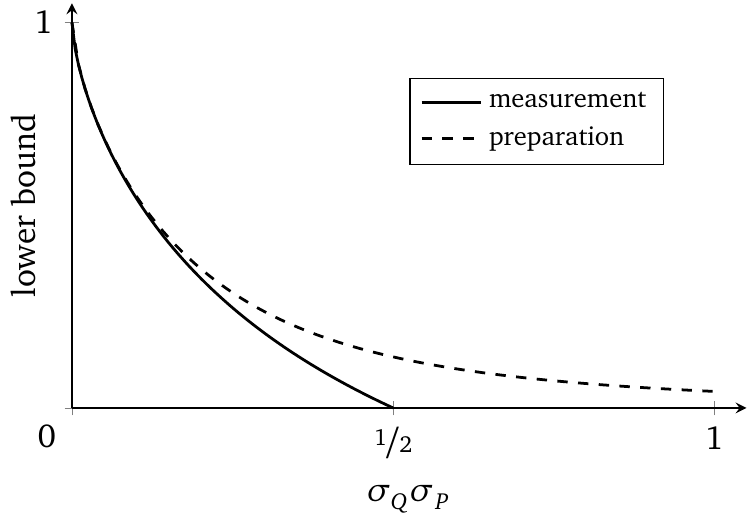}
\caption{\label{fig:plotbounds} Uncertainty bounds appearing in Theorem~\ref{thm:infinitemerit} in terms of the combined precision $\sigma_Q\sigma_P$. The solid line corresponds to the bound involving measurement disturbance, \eqref{eq:infiniteerror-lowmeasdist}, the dashed line to the bound involving preparation disturbance, \eqref{eq:infiniteerror-lowprepdist}.}
\end{figure}

\subsection{Proofs}
The proof of Theorem~\ref{thm:infinitemerit} is broadly similar to the  finite-dimensional case. 
We would again like to begin with $\cF_{\sQ A\to \sY B}$ from Lemma~\ref{lem:FandP} such that $\delta(\cE,\cQ'_Q\cF)\leq \sqrt{2\eps_Q(\cE)}$. 
However, the argument does not quite go through, as in infinite dimensions we cannot immediately ensure that the infimum in Stinespring continuity is attained. 
Nonetheless, we can consider a sequence of maps $(\cF_n)_{n\in \mathbb N}$ such that the desired distinguishability bound holds in the limit $n\to \infty$. 

To show \eqref{eq:infiniteerror-lowmeasdist}, we follow the steps in \eqref{eq:trianglemono}. 
Now, though, consider the map $\cF_n'$ which just appends $\sQ$ to the output of $\cF_n$, and define $\cN=\cQ_Q'\cF_n\cR\cQ_P$, where $\cQ_Q'$ is the instrument associated with position measurement $\cQ_Q$.
Then we have
\begin{subequations}
\begin{align}
\delta(\cQ_P,\cN\cT_\sQ) 
&\leq \delta(\cQ_P,\cE\circ\cR\circ\cQ_P)+\delta(\cE\circ\cR\circ\cQ_P,\cN\cT_\sQ)\\
&\leq \delta(\cQ_P,\cE\circ\cR\circ\cQ_P)+\delta(\cE,\cQ_Q'\circ\cF_n)\,.
\end{align}
\end{subequations}
Taking the limit $n\to \infty$ and the infimum over recovery maps $\cR$ produces $\sqrt{2\eps_Q(\cE)}+\nu_P(\cE)$ on the righthand side. 
We can bound the lefthand side by testing with pure unentangled inputs:
%Defining $\cN=\cQ_Q'\cF_n\cR\cQ_P$ for convenience, we have
\begin{align}
\label{eq:Nlower}
\delta(\cQ_P,\cN\cT_\sQ)
&\geq \sup_{\psi,f}\braket{\psi, \left(\cQ_P(f) - [\cN\cT_\sQ](f) \right) \psi}.
\end{align}

Now we want to show that, since $\cQ_P$ is covariant with respect to phase space translations, without loss of generality we can take $\cN$ to be covariant as well. Consider the translated version of both $\cQ_P$ and $\cN\cT_\sQ$, obtained by shifting their inputs and outputs correspondingly by some amount $z=(q,p)$. 
For the states $\psi$ this shift is implemented by the Weyl-Heisenberg operators $V_z$, while for tests $f$ only the value of $p$ is relevant. 
Any such shift does not change the distinguishability, because we can always shift $\psi$ and $f$ as well to recover the original quantity. 
Averaging over the translated versions therefore also leads to the same distinguishability, and since $\cQ_P$ is itself covariant, the averaging results in a covariant $\cN\cT_\sQ$. 
The details of the averaging require some care in this noncompact setting, but are standard by now, and we refer the reader to the work of Werner~\cite{werner_uncertainty_2004} for furter details. 
Since $\cT_\sQ$ just ignores the $\sQ$ output of the measurement $\cN$, we may thus proceed by assuming that $\cN$ is a covariant measurement.

Any covariant $\cN$ has the form 
\begin{align}
\cN(f)=\int_{\mathbb R^2}\frac{\text{d}z}{{2\pi}}\,f(z) V_{z}mV_{z}^*,
\end{align}
for some positive operator $m$ such that $\tr[m]=1$. 
Due to the definition of $\cN$, the position measurement result is precisely that obtained from $\cQ_Q$. 
By the covariant form of $\cN$, this implies that the position width of $m$ is just $\sigma_Q$ (or rather that of the parity version of $m$, see~\cite{werner_uncertainty_2004}). 
Suppose the momentum distribution has standard deviation $\widehat \sigma_P$; then $\sigma_Q\widehat \sigma_P\geq 1/2$ follows from the Kennard uncertainty relation~\cite{kennard_zur_1927}. 

Now we can evaluate the lower bound term by term. 
Let us choose a Gaussian state in the momentum representation and test function: $\psi=g_{\sigma_\psi}^\half$ and $f=\sqrt{2\pi}\sigma_f g_{\sigma_f}$.
Then the first term is a straightforward Gaussian integral, since the precision-limited measurement just amounts to the ideal measurement convolved with $g_{\sigma_P}$:
\begin{subequations}
\begin{align}
    \braket{\psi, \cQ_P(f)\psi} 
    %&= \braket{\psi, W_P^* \pi\left( f\right) W_P \psi} \\ %&= \int dx dx' dx'' \; \sqrt{g_{\sigma_\text{X}}\left(x-x'\right)}\sqrt{g_{\sigma_\text{X}}\left(x-x''\right)} \braket{\varphi, P_{x'}P_{x''}\varphi} f(x)\\
    &= \int_{\mathbb R^2} dp' dp\; g_{\sigma_\psi}(p')g_{\sigma_P}(p'-p) f(p) \\
    %&= \int dx\; g_{\sqrt{\sigma_X^2 + \sigma_\varphi^2 }}\left(x\right) f(x) \\
    %&= \frac{1}{\sqrt{2\pi\left(\sigma_X^2 + \sigma_\psi^2\right)}}\int dx\; \exp\left[-\frac{x^2}{2}\left(\frac{1}{\sigma_X^2 + \sigma_\psi^2}+ \frac{1}{\sigma_f^2}\right)\right]\\
    %&= \frac{1}{\sqrt{2\pi\left(\sigma_X^2 + \sigma_\psi^2\right)}}\sqrt{2\pi} \left( \sqrt{\frac{1}{\sigma_X^2 + \sigma_\psi^2} + \frac{1}{\sigma_f^2}}\right)^{-1}\\
    %&= \frac{1}{\sqrt{1 + \frac{\sigma_X^2 + \sigma_\psi^2}{\sigma_f^2}}} \\
    &= \frac{\sigma_f}{\sqrt{\sigma_f^2 + \sigma_P^2 + \sigma_\psi^2}}.
    \label{eq:jmone}
\end{align}
\end{subequations}
The second term is the same, just with $\widehat\sigma_P$ instead of $\sigma_P$, so we have
\begin{align}
\label{eq:uglydiff}
\delta(\cQ_{P},\cN\cT_\sQ) 
&\geq \frac{\sigma_f}{\sqrt{\sigma_f^2 + \sigma_P^2 + \sigma_\psi^2}}-\frac{\sigma_f}{\sqrt{\sigma_f^2 + \widehat\sigma_P^2 + \sigma_\psi^2}}.
\end{align}
The tightest possible bound comes from the smallest $\widehat \sigma_P$, which is $1/2\sigma_Q$, and the bound is clearly trivial if $\sigma_Q\sigma_P\geq 1/2$. 
%\jmr{(What can we actually achieve with covariant measurement?)}
If this is not the case, we can optimize our choice of $\sigma_f$. 
To simplify the calculation, assume that $\sigma_\psi$ is small compared to $\sigma_f$ (so that we are testing with a very narrow momentum state).
Then, with $c=2\sigma_Q\sigma_P$, the optimal $\sigma_f$ is given by  
\begin{equation}
  \sigma_f^2 = \frac{\sigma_P^2}{c^{2/3}(1+c^{2/3})}.
  \label{eq:optimalsf}
\end{equation}
Using this in \eqref{eq:uglydiff} gives \eqref{eq:infiniteerror-lowmeasdist}.

For preparation disturbance, proceed as before to obtain %let $\cP'=\cP_P\cQ_Q'\cF_n'\cR$ to obtain
\begin{subequations}
\begin{align}
\delta(\cP_P,\cP_P\cQ_Q'\cF_n'\cR\cT_\sQ)
&\leq \delta(\cP_P,\cP_P\cE\cR)+\delta(\cP_P\cE\cR,\cP_P\cQ_Q'\cF_n'\cR\cT_\sQ)\\
&\leq  \delta(\cP_P,\cP_P\cE\cR)+\delta(\cE,\cQ_Q'\cF_n)
%&\leq \delta(\cP_Z,\cP_Z\cE\cR)+\delta(\cP_Z\cE\cR,\cP_Z\cQ_X\cP\cR)\\
%&\leq \delta(\cP_Z,\cP_Z\cE\cR)+\sqrt{2\eps_X(\cE)}\,.
\end{align}
\end{subequations}
Now the limit $n\to \infty$ and the infimum over recovery maps $\cR$ produces $\sqrt{2\eps_Q(\cE)}+\eta_P(\cE)$ on the righthand side. 
A lower bound on the quantity on the lefthand side can be obtained by using $\cP_P$ to prepare a $\sigma_P$-limited input state and making a $\sigma_m$-limited momentum measurement $\bar\cQ_P$ measurement on the output, so that, for $\cN$ as before,
\begin{align}
\delta(\cP_P,\cP_P\cQ_Q'\cF_n'\cR\cT_\sQ) \geq \sup_{\psi:\text{Gaussian};f} \braket{\psi, \left(\bar\cQ_P(f) - [\cN\cT_\sQ](f) \right) \psi}.
\end{align}
The only difference to \eqref{eq:Nlower} is that the supremum is restricted to Gaussian states of width $\sigma_P$. 
The covariance argument nonetheless goes through as before, and we can proceed to evaluate the lower bound as above. 
This yields
\begin{align}
\delta(\cP_P,\cP_P\cQ_Q'\cF_n'\cR\cT_\sQ) \geq
\frac{\sigma_f}{\sqrt{\sigma_f^2 + \sigma_m^2 + \sigma_P^2}}-\frac{\sigma_f}{\sqrt{\sigma_f^2 + \frac1{4\sigma_Q^2} + \sigma_P^2}}\,.
\end{align}
We may as well consider $\sigma_m\to 0$ so as to increase the first term. 
The optimal $\sigma_f$ is then given by the optimizer above, replacing $c$ with $c/\sqrt{1+c^2}$. 
Making the same replacement in \eqref{eq:infiniteerror-lowmeasdist} yields \eqref{eq:infiniteerror-lowprepdist}. 

\section{Applications}
\label{sec:app}
\subsection{No information about $Z$ without disturbance to $X$}
\label{sec:appinfodisturb}
A useful tool in the construction of quantum information processing protocols is the link between reliable transmission of $X$ eigenstates through a channel $\cN$ and $Z$ eigenstates through its complement $\cN^\sharp$, particularly when the observables $X$ and $Z$ are maximally complementary, i.e.\ $|\langle \varphi_x|\vartheta_z\rangle|^2=\frac 1d$ for all $x,z$. Due to the uncertainty principle, we expect that a channel cannot reliably transmit the bases to different outputs, since this would provide a means to simultaneously measure $X$ and $Z$. 
This link has been used by Shor and Preskill to prove the security of quantum key distribution~\cite{shor_simple_2000} and by Devetak to determine the quantum channel capacity~\cite{devetak_private_2005}.
Entropic state-preparation uncertainty relations from \cite{berta_uncertainty_2010,tomamichel_uncertainty_2011} can be used to understand both results, as shown in~\cite{renes_duality_2011,renes_physics_2012}. 

However, the above approach has the serious drawback that it can only be used in cases where the specific $X$-basis transmission over $\cN$ and $Z$-basis transmission over $\cN^\sharp$ are in some sense compatible and not \emph{counterfactual}; because the argument relies on a state-dependent uncertainty principle, both scenarios must be compatible with the same quantum state. 
Fortunately, this can be done for both QKD security and quantum capacity, because at issue is whether $X$-basis ($Z$-basis) transmission is reliable (unreliable) on average \emph{when the states are selected uniformly at random}. Choosing among either basis states at random is compatible with a random measurement in either basis of half of a maximally entangled state, and so both $X$ and $Z$ basis scenarios are indeed compatible. The same restriction to choosing input states uniformly appears in the recent result of \cite{buscemi_noise_2014}, as it also ultimately relies on a state-preparation uncertainty relation.

Using Theorem~\ref{thm:finiteprepdist} we can extend the method above to counterfactual uses of arbitrary channels $\cN$, in the following sense: If acting with the channel $\cN$ does not substantially affect the possibility of performing an $X$ measurement, then $Z$-basis inputs to $\cN^\sharp$ result in an essentially constant output. More concretely, we have 
\begin{corollary}
\label{cor:leakage}
  Given a channel $\cN$ and complementary channel $\cN^\sharp$, suppose that there exists a measurement $\Lambda_X$ such that $\delta(\cQ_X,\cN\circ \Lambda_X)\leq \eps$. 
  Then there exists a constant channel $\cC$ such that 
  \begin{align}
  \delta(\cQ_Z^{\natural}\circ \cN^\sharp, \cC)\leq \sqrt{2\eps}+\tfrac{d-1}d-\widehat c_P(X,Z).
  \end{align} 
  For maximally complementary $X$ and $Z$, $\delta(\cQ_Z^{\natural}\circ\cN^\sharp,\cC)\leq \sqrt{2\eps}$.
\end{corollary}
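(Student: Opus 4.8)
The plan is to manufacture a single apparatus $\cE$ out of a Stinespring dilation of $\cN$ and then feed it into the figure-of-demerit relation \eqref{eq:epseta2} of Theorem~\ref{thm:finiteprepdist}. Let $V:\cH_A\to \cH_B\otimes \cH_E$ be a dilation of $\cN$, so that $\cN^\sharp$ results from retaining the $E$ factor and discarding $B$. I would define $\cE:L^\infty(\sX)\otimes \cB(\cH_E)\to \cB(\cH_A)$ to be the instrument which, on input $A$, applies $V$, measures $\Lambda_X$ on the $B$ factor to produce a classical outcome in $\sX$, and keeps the $E$ factor as its quantum output.

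The two marginals of $\cE$ are the relevant objects. Discarding the quantum output $E$ (the partial-trace map $\cT_E$) leaves exactly the measurement $\cN\circ \Lambda_X$, so with the trivial postprocessing $\cR=\id$ one gets $\eps_X(\cE)\leq \delta(\cQ_X,\cN\circ \Lambda_X)\leq \eps$. Discarding instead the classical outcome $\sX$ leaves exactly $\cN^\sharp$, since summing the post-measurement states over the outcome collapses to the partial trace over $B$. This is the key structural point: one dilation simultaneously routes the $X$-measurement into the classical output and $\cN^\sharp$ into the quantum output.

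Next I would apply the preparation-disturbance figure-of-demerit relation in its pinching form. Using the identity $\widetilde\eta_Z=\widehat\eta_Z$ together with \eqref{eq:prepdisturbtilde} and \eqref{eq:epseta2} for the instrument $\cE$,
\[
\sqrt{2\eps_X(\cE)}+\tfrac{d-1}{d}-\inf_{\cC}\delta(\cC,\cQ_Z^\natural\circ \cE)\geq \widehat c_P(X,Z),
\]
which rearranges to $\inf_\cC \delta(\cC,\cQ_Z^\natural\circ\cE)\leq \sqrt{2\eps_X(\cE)}+\tfrac{d-1}{d}-\widehat c_P(X,Z)$. Finally I would invoke monotonicity of $\delta$ under discarding the $\sX$ output: applying this post-processing to both $\cQ_Z^\natural\circ \cE$ and the optimal constant channel turns the former into $\cQ_Z^\natural\circ \cN^\sharp$, keeps the latter constant, and can only decrease $\delta$. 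Combined with $\eps_X(\cE)\leq \eps$ this yields the stated bound; for maximally complementary bases the complementarity measure equals $\widehat c_P(X,Z)=\tfrac{d-1}{d}$, leaving exactly $\sqrt{2\eps}$.

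The hard part will be entirely in the bookkeeping of the construction, not in any estimate. I would need to verify carefully that measuring $\Lambda_X$ on $B$ and forgetting the outcome does not alter the reduced channel on $E$, so that the $\sX$-marginal of $\cE$ really is $\cN^\sharp$, and that the pinched-input relation $\widetilde\eta_Z=\widehat\eta_Z$ with \eqref{eq:epseta2} applies verbatim to this particular $\cE$. Once the two marginals are pinned down, the remainder is a single rearrangement plus one use of monotonicity.
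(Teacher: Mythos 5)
Your proposal is correct and matches the paper's own proof: the paper likewise defines $\cE$ as the dilation isometry of $\cN$ followed by the $\Lambda_X$ measurement, applies \eqref{eq:epseta2} together with \eqref{eq:prepdisturbtilde} and $\widetilde\eta_Z=\widehat\eta_Z$, and then uses monotonicity of $\delta$ under discarding the measurement outcome to pass from $\cQ_Z^\natural\circ\cE$ to $\cQ_Z^\natural\circ\cN^\sharp$. Your extra bookkeeping (verifying the two marginals of $\cE$ and that forgetting the $\Lambda_X$ outcome leaves the reduced channel on $E$ equal to $\cN^\sharp$) is exactly what the paper's terser argument implicitly relies on, and it goes through as you describe.
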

\begin{proof}
Let $V$ be the Stinespring dilation of $\cN$ such that $\cN^\sharp$ is the complementary channel and define $\cE={\mathcal V}_\cN\circ\Lambda_X$. 
For $\cC$ the optimal choice in the definition of $\widehat \eta_Z(\cE)$, \eqref{eq:epseta2}, \eqref{eq:prepdisturbtilde}, and $\widetilde \eta_Z=\widehat\eta_Z$ imply $\delta(\cQ_Z^\natural\circ\cE,\cC)\leq \sqrt {2\eps}+\frac{d-1}d-\widehat c_P(X,Z)$. Since $\cN^\sharp$ is obtained from $\cE$ by ignoring the $\Lambda_X$ measurement result, $\delta( \cQ_Z^\natural\circ \cN^\sharp,\cC)\leq \delta( \cQ_Z^\natural\circ \cE,\cC)$.
\end{proof}

This formulation is important because in more general cryptographic and communication scenarios we are interested in the worst-case behavior of the protocol, not the average case under some particular probability distribution. For instance, in~\cite{lacerda_classical_2014} the goal is to construct a classical computer resilient to leakage of $Z$-basis information by establishing that reliable $X$ basis measurement is possible despite the interference of the eavesdropper. However, such an $X$ measurement is entirely counterfactual and cannot be reconciled with the actual $Z$-basis usage, as the $Z$-basis states will be chosen \emph{deterministically} in the classical computer. 

It is important to point out that, unfortunately, calibration testing is in general completely insufficient to establish a small value of $\delta(\cQ_X,\cN\circ \Lambda_X)$. 
More specifically, the following example shows that there is no dimension-independent bound connecting $\inf_{\Lambda_X}\delta(\cQ_X,\cN\circ \Lambda_X)$ to the worst case probability of incorrectly identifying an $X$ eigenstate input to $\cN$, for arbitrary $\cN$.
Let the quantities $p_{yz}$ be given by $p_{y,0}=2/d$ for $y=0,\dots, d/2-1$, $p_{y,1}=2/d$ for $y=d/2,\dots,d-1$, and $p_{y,z}=1/d$ otherwise, where we assume $d$ is even, and then define the isometry $V:\cH_A\to \cH_B\otimes \cH_C\otimes\cH_D$ as the map taking  $\ket z_A$ to $\sum_y \sqrt{p_{yz}}\ket y_B\ket z_C\ket y_D$.
Finally, let $\cN:\cB(\cH_B)\otimes \cB(\cH_C)\to \cB(\cH_A)$ be the channel obtained by ignoring $D$, i.e.\ in the Schr\"odinger picture $\cN^*(\rho)=\tr_D[V\rho V^*]$.
Now consider inputs in the $X$ basis, with $X$ canonically conjugate to $Z$.
As shown in Appendix~\ref{app:counterexample}, the probability of correctly determining any particular $X$ input is the same for all values, and is equal to $\frac1{d^2}\sum_y \left(\sum_z \sqrt{p_{y,z}}\right)^2=(d+\sqrt 2-2)^2/d^2$.
The worst case $X$ error probability therefore tends to zero like $1/d$ as $d\to \infty$.
On the other hand, $Z$-basis inputs $0$ and $1$ to the complementary channel $\cE^\sharp$ result in completely disjoint output states due to the form of $p_{yz}$. 
Thus, if we consider a test which inputs one of these randomly and checks for agreement at the output, we find $\inf_\cC\delta(\cQ_Z^\natural\circ\cN^\sharp,\cC)\geq \tfrac12$. Using the bound above, this implies $\inf_{\Lambda_X}\delta(\cQ_X,\cN\circ\Lambda_X)\geq \frac 18$.
This is not 1, but the point is it is bounded away from zero and independent of $d$: There must be a factor of $d$ when converting between the worst case error probability and the distinguishability.

We can appreciate the failure of calibration in this example from a different point of view, by appealing to the information-disturbance tradeoff of~\cite{kretschmann_information-disturbance_2008}.
Since $\cN$ transmits $Z$ eigenstates perfectly to $BC$ and $X$ eigenstates almost perfectly, we might be tempted to conclude that the channel is close to the identity channel. 
However, the information-disturbance tradeoff implies that complements of channels close to the identity are close to constant channels. 
Clearly this is not the case here, since $\mathcal N^*(\ketbra 0)$ is distinguishable from $\mathcal N^*(\ketbra 1)$. 
This point is discussed further by one of us in~\cite{renes_uncertainty_2016-1}.
The counterexample constructed above it not symmetric for $Z$ inputs, and it is an open question if calibration is sufficient in the symmetric case. 
For channels that are covariant with respect to the Weyl-Heisenberg group (also known as the generalized Pauli group), it is not hard to show that calibration \emph{is} in fact sufficient.

\subsection{Connection to wave-particle duality relations}
In~\cite{englert_fringe_1996} Englert presents a wave-particle complementarity relation in a Mach-Zehnder interferometer, quantifying the extent to which ``the observations of an interference pattern and the acquisition of which-way information are mutually exclusive''. 
The particle-like ``which-way'' information is obtained by additional detectors in the arms of the interferometer, while fringe visibility is measured by the population difference between the two output ports of the interferometer. 
The detectors can be thought of as producing different states in an ancilla system, depending on the path taken by the light. 
Englert shows the following tradeoff between the visibility $\cV$ and distinguishability $\cD$ of the which-way detector states:
\begin{align}
\label{eq:englert}
\cV^2+\cD^2\leq 1.  
\end{align}

We may regard the entire interferometer plus which-way detector as an apparatus $\cE_{\rm MZ}$ with quantum and classical output. 
It turns out that $\cE_{\text{MZ}}$ is precisely the nonideal qubit $X$ measurement considered in \S\ref{sec:definitions} and that path distinguishability is related error of $X$ and visibility to disturbance (all of which are equal in this case by \eqref{eq:MZdisturbance}) of a conjugate observable $Z$.  
More specifically, as shown in Appendix~\ref{app:englert}, 
\begin{align}
\label{eq:comprelations}
\eps_X(\cE_{\rm MZ}) &=\tfrac12(1-\cD)\quad\text{and}\quad
\nu_Z(\cE_{\rm MZ})=\eta_Z(\cE_{\rm MZ})=\widehat \eta_Z(\cE_{\rm MZ}) = \tfrac12(1-\cV).
\end{align} 
Therefore, \eqref{eq:englert} is also an error-preparation disturbance relation. 
By the same token, the uncertainty relations in Theorems~\ref{thm:finitemeasdist} and \ref{thm:finiteprepdist} imply wave-particle duality relations. 

Let us comment on other connections between uncertainty and duality relations. 
Recently, \cite{coles_equivalence_2014} showed a relation between wave-particle duality relations and entropic uncertainty relations. 
As discussed above, the latter are state-dependent state-preparation relations, and so the interpretation of the wave-particle duality relation is somewhat different. 
Here we have shown that Englert's relation can actually be understood as a state-independent relation.

Each of the disturbance measures are related to visibility in Englert's setup.
It is an interesting question to consider a multipath interferometer to settle the question of which disturbance measure should be associated to visibility in general. From the discussion of \cite{coles_entropic_2016}, it would appear that visibility ought to be related to measurement disturbance $\nu_Z$, but we leave a complete analysis to future work.

\section{Comparison to previous work}
\label{sec:comparison}

Broadly speaking, there are two main kinds of uncertainty relations: those which are constraints on fixed experiments, including the details of the input quantum state, and those that are constraints on quantum devices themselves, independent of the particular input. 
All of our relations are of the latter type, in contrast to entropic relations, which are typically of the former type.
At a formal level, this distinction appears in whether or not the quantities involved in the precise relation depend on the input state or not.\footnote{This is separate from the issue of whether the bound depends on the state, as for instance in the Robertson relation~\cite{robertson_uncertainty_1929}.%entropic bound with quantum memory~\cite{berta_uncertainty_2010}.
} 
Each type of relation certainly has its use, though when considering error-disturbance uncertainty relations, we argued in the introduction that the conceptual underpinnings of state-dependent relations describing fixed experiments are unclear.
Indeed, it is precisely because of the uncertainty principle that trouble arises in defining error and disturbance in this case.  %since it becomes problematic not only to define error and disturbance but also to relate these measures in the context of a single experiment.
Worse still, there can be no nontrivial bound relating error and disturbance which applies universally, i.e.\ to all states~\cite{korzekwa_operational_2014}. 

Independent of the previous question, another major contrast between different kinds of uncertainty relations is whether they depend on the values taken by the observables, or only the configuration of their eigenstates. 
Again, our relations are all of the latter type, but now we share this property with entropic relations. 
That is not to say that the observable values are completely irrelevant in our setting, merely that they are not necessarily relevant.
In distinguishing the outputs of an ideal position measurement of given precision from the outputs of the actual device, one may indeed make use of the difference in measurement values. 
But this need not be the only kind of comparison.

In the recent work of Busch, Lahti, and Werner~\cite{busch_proof_2013}, the authors used the Wasserstein metric of order two, corresponding to the mean squared error, as the underlying distance $D(.,.)$ to measure the closeness of probability distributions. 
If $\cM^Q$, $\cM^P$ are the marginals of some joint measurement of position $Q$ and momentum $P$, and $X_\rho$ denotes the distribution coming from applying the measurement $X$ to the state $\rho$, their relation reads
\begin{align}
  \label{eq:relation_wernergang}
  \sup_\rho D(\cM^Q_\rho,Q_\rho) \cdot \sup_\rho D(\cM^P_\rho,P_\rho) \geq c \,,
\end{align}
for some universal constant $c$. In~\cite{busch_measurement_2014}, the authors generalize their results to arbitrary Wasserstein metrics. 
As in our case, the two distinguishability quantities in \eqref{eq:relation_wernergang} are separately maximized over all states, and hence the resulting expression characterizes the goodness of the approximate measurement. 

One could instead ask for a ``coupled optimization'', a relation of the form 
\begin{align}
\label{eq:coupledsup}
\sup_\rho \left[D(\cM^Q_\rho,Q_\rho) D(\cM^P_\rho,P_\rho) \right]\geq c',
\end{align}
for some other constant $c'$.\footnote{Such an approach has been advocated by David Reeb (private communication).} 
This approach is taken in \cite{barchielli_measurement_2016} for the question of joint measurability. 
While this statement certainly tells us that no device can accurately measure both position and momentum for all input states, the bound $c'$ only holds (and can only hold) for the worst possible input state. 
In contrast, our bounds, as well as in \eqref{eq:relation_wernergang} are {state-independent} in the sense that the bound holds for all states. 
Indeed, the two approaches are more distinct than the similarities between \eqref{eq:relation_wernergang} and \eqref{eq:coupledsup} would suggest. 
By optimizing over input states separately, our results and those of \cite{werner_uncertainty_2004,busch_proof_2013,busch_measurement_2014} are statements about the properties of measurement devices themselves, independent of any particular experimental setup. 
State-dependent settings capture the behavior of measurement devices in specific experimental setups and must therefore account for the details of the input state. 

The same set of authors also studied the case of finite-dimensional systems, in particular qubit systems, again using the Wasserstein metric of order two~\cite{busch_heisenberg_2014}. 
Their results for this case are similar, with the product in \eqref{eq:relation_wernergang} replaced by a sum. 
Perhaps most closely related to our results is the recent work by Ipsen~\cite{ipsen_error-disturbance_2013}, who uses the variational distance as the underlying distinguishability measure to derive similar additive uncertainty relations. 
We note, however, that both \cite{busch_heisenberg_2014} and \cite{ipsen_error-disturbance_2013} only consider joint measurability and do not consider the change to the state after the approximate measurement is performed, as it is done in our error-disturbance relation. 
Furthermore, both base their distinguishability measures on the measurement statistics of the devices alone. 
But this does not necessarily tell us how distinguishable two devices ultimately are, as we could employ input states entangled with ancilla systems to test them. 
These two measures can be different~\cite{kitaev_quantum_1997}, even for entanglement-breaking channels~\cite{sacchi_entanglement_2005}. 
In Appendix~\ref{app:entcb} we give an example which shows that this is also true of quantum measurements, a specific kind of entanglement-breaking channel. 

Entropic quantities are another means of comparing two probability distributions, an approach taken recently by Buscemi \emph{et al.}~\cite{buscemi_noise_2014} and Coles and Furrer~\cite{coles_state-dependent_2015} (see also Martens and de~Muynck~\cite{martens_disturbance_1992}). 
Both contributions formalize error and disturbance in terms of relative or conditional entropies, and derive their results from entropic uncertainty relations for state preparation which incorporate the effects of quantum entanglement~\cite{berta_uncertainty_2010,tomamichel_uncertainty_2011}. 
They differ in the choice of the entropic measure and the choice of the state on which the entropic terms are evaluated. 
Buscemi \emph{et al.}\ find state-independent error-disturbance relations involving the von Neumann entropy, evaluated for input states which describe observable eigenstates chosen uniformly at random. 
As described in Sec.~\ref{sec:app}, the restriction to uniformly-random inputs is significant, and leads to a characterization of the average-case behavior of the device (averaged over the choice of input state), not the worst-case behavior as presented here. 
Meanwhile, Coles and Furrer make use of general R\'enyi-type entropies, hence also capturing the worst-case behavior. 
However, they are after a state-dependent error-disturbance relation which relates the amount of information a measurement device can extract from a state about the results of a \emph{future} measurement of one observable to the amount of disturbance caused to other observable. 

An important distinction between both these results and those presented here is the quantity appearing in the uncertainty bound, i.e.\ the quantification of complementarity of two observables. 
As both the aforementioned results are based on entropic state-preparation uncertainty relations, they each quantify complementarity by the largest overlap of the eigenstates of the two observables. 
This bound is trivial should the two observables share an eigenstate. 
However, a perfect joint measurement is clearly impossible even if the observables share all but two eigenvectors (if they share all but one, they necessarily share all eigenvectors). 
All three complementarity measures used here are nontrivial whenever not all eigenvectors are shared between the observables.  

\section{Conclusions}
\label{sec:openquestions}

%stress again that the point of uncertainty relations is to "undo" the conclusion of the uncertainty relation --- they give us a way to infer one property from a non-commuting property. The reason we can do this is that we use the structure of the theory. 

%the point of having a nice bound is that it tells you something about the compatibility of observables. just computing the convex hull of the feasible set doesn't give you any particular complementarity measure.

We have formulated simple, operational definitions of error and disturbance based on the probability of distinguishing the actual measurement apparatus from the relevant ideal apparatus by any testing procedure whatsoever. 
The resulting quantities are conceptually straightfoward properties of the measurement apparatus, not any particular fixed experimental setup. 
We presented uncertainty relations for both joint measurability and the error-disturbance tradeoff, for both arbitrary finite-dimensional systems and for position and momentum. 
In the former case the bounds involve simple measures of the complementarity of two observables, while the latter involve the ratio of the desired position and momentum precisions $\sigma_Q$ and $\sigma_P$ to Planck's constant $\hbar$. 
We further showed that this operational approach has  applications to quantum information processing and to wave-particle duality relations. Finally, we presented a detailed comparison of the relation of our results to previous work on uncertainty relations. 

Several interesting questions remain open. 
One may inquire about the tightness of the bounds. The qubit example for conjugate observables discussed at the end of \S\ref{sec:definitions} shows that the finite-dimensional bounds of Theorem \ref{thm:finiteprepdist} are tight for small error $\eps_X$, though no conclusion can be drawn from this example for small preparation disturbance. 
It would be interesting to check the tightness of the position and momentum bounds by computing the error and disturbance measures for a device described by a covariant measurement. 
For reasons of simplicity, we have not attempted to incorporate precision limits into the definitions of error and distinguishability of position and momentum. 
Doing so would lead to more conceptually satisfying bounds and perhaps remedy the fact that the measurement error-preparation disturbance bound is nontrivial even outside the Heisenberg limit. 
Bounds for other observables in infinite dimensions would also be quite interesting, for instance the mixed discrete/continuous case of energy and position of a harmonic oscillator.
Restricting to covariant measurements, in finite or infinite dimensions, it would also be interesting to determine if entangled inputs improve the distinguishability measures, or whether calibration testing is sufficient. From the application in Corollary~\ref{cor:leakage}, it would appear that calibration is sufficient, but we have not settled the matter conclusively.

\vspace{2mm}
{\bf Acknowledgements:} 
The authors are grateful to David Sutter, Paul Busch, Omar Fawzi, Fabian Furrer, Michael Walter and especially David Reeb and Reinhard Werner for helpful discussions.  
This work was supported by the Swiss National Science Foundation (through the NCCR `Quantum Science and Technology' and grant no. 200020-135048) and the European Research Council (grant no.\ 258932). SH is funded by the German Excellence Initiative and the European
Union Seventh Framework Programme under grant agreement no.\ 
291763. He acknowledges additional support by DFG project no.\ 
K05430/1-1.

\printbibliography[heading=bibintoc,title=References]

\appendix
\section{Entanglement improves the distinguishability of measurements}
\label{app:entcb}
Here we give an example of two measurements whose distinguishability is improved by entanglement. 
Let $\cE_1$ be a measurement in an arbitrary chosen basis $\ket{b_0}$, $\ket{b_1}$, and $\ket{b_2}$, and define $\cE_2$ be measurement in the basis given by $\ket{\theta_0}=\frac13(2\ket{b_0}+2\ket{b_1}-\ket{b_2})$, $\ket{\theta_1}=\frac13(-1\ket{b_0}+2\ket{b_1}+2\ket{b_2})$ and $\ket{\theta_2}=\frac13(2\ket{b_0}-\ket{b_1}+2\ket{b_2})$. 
Using  $T_k=\ketbra{b_k}-\ketbra{\theta_k}$, the largest distinguishability to be had without entanglement is given by 
\begin{subequations}
\begin{align}
\delta'(\cE_1,\cE_2)
&=\max_\rho \tfrac12\sum_{k=0}^2\big|\tr[\rho T_k]\big|\\
&=\max_\rho \max_{\{s_k=\pm 1\}} \tfrac12\sum_{k=0}^2 \tr[s_k T_k\rho]\\
&=\max_{\{s_k=\pm 1\}} \big\| \sum_{k=0}^2 s_k T_k \big\|_\infty\,.
\end{align}
\end{subequations}
Checking the eight combinations of $s_k$, one easily finds that the maximimum value is $\sqrt 5/3$. 

Meanwhile, if we use the state 
\begin{align}
\rho=\frac 16\begin{pmatrix} 2 & -1 & -1\\ -1 & 2 & -1\\ -1 & -1 & 2\end{pmatrix}
\end{align}
to define ${\Psi}=(\id\otimes \sqrt{\rho})\Omega(\id\otimes \sqrt{\rho})$ for $\Omega$ the projector onto $\ket{\Omega}=\sum_k \ket{b_k}\otimes \ket{b_k}$, then 
\begin{align}
\delta(\cE_1,\cE_2)\geq \tfrac 12\sum_{k=0}^2\big\|\tr_1[(T_k \otimes \id)\Psi]\big\|_1.
\end{align}
Direct calculation shows that $\delta(\cE_1,\cE_2)\geq\sqrt3/2$.
Thus, there exist projective measurements for which $\delta(\cE_1,\cE_2)> \delta'(\cE_1,\cE_2)$.

%Question: what about conjugate measurements? Answer: same conclusion holds; there is a gap.

\section{Computing error and disturbance by convex optimization}
\label{app:sdp}
Here we detail how to compute the error and disturbance quantities via semidefinite programming and calculate these for the nonideal qubit $X$ measurement example. 
Given a Hilbert space $\cH$ with basis $\{\ket{k}\}_{k=1}^d$, define, just as above, $\ket\Omega=\sum_{k=1}^d\ket{k}\otimes \ket{k}\in \cH\otimes\cH$. 
Then, for any channel  $\cE$, %:\cS(\cH_A)\to \cS(\cH_B)$, 
let $\sC$ denote the Choi mapping of $\cE^*$ to an unnormalized bipartite state,
\begin{align}
  \sC(\cE):=\cE^*\otimes\cI(\ketbra\Omega)\in \cB(\cH_B\otimes \cH_A)\,.
\end{align}
The action of the channel can be compactly expressed in terms of the Choi operator as $\cE_{A\to B}(\Lambda_B)=\tr_B[\Lambda_B \sC(\cE)_{BA}]$ or in the Schr\"odinger picture as $\cE_{A\to B}^*(\rho_A)=\tr_A[\sC(\cE)_{BA}\rho_A^T]$, where the transpose is taken in the basis defining $\sC$ (see, e.g.\ \cite{wolf_quantum_2012}).  
The cb norm can then be expressed in primal and dual form as~\cite{watrous_semidefinite_2009}

\begin{align}
\label{eq:sdp}
\tfrac12\cbnorm{\cE_{A\to B}} 
&=
\begin{aligned}[t]
& \maximum_{K,\rho}
& & \tr[\sC(\cE)_{BA}K_{BA}]\\
& \text{subject to}
& & K_{BA}-\id_B\otimes \rho_A\leq 0, \,\,\tr[\rho_A]\leq 1,\\
&&& \rho_A,K_{BA} \geq 0,
\end{aligned}\\[2mm]
&=
\begin{aligned}[t]
& \minimum_{T,\lambda}
& & \lambda\\
& \text{subject to}
& & T_{BA}\geq \sC(\cE)_{BA},\,\,\lambda \id_A-T_{A}\geq 0,\\
&&& T_{BA},\lambda \geq 0\,.
\end{aligned}
\label{eq:dual}
\end{align}
Note that in the dual formulation the objective function is just the operator norm $\opnorm{T_A}$. 
For infinite-dimensional systems the Choi operator does not have such a nice form, though it might be possible to formulate the cb norm of Gaussian channels as a tractable optimization.

The additional optimizations involving $\cR$ in the measures of error and disturbance are immediately compatible with the dual formulation in \eqref{eq:dual}, and so these quantities can be cast as semidefinite programs. 
To start, consider the error in measuring $X$. 
With $Q_{\sX A}=\sC(\cQ_X)$ and $E_{\sY BA}=\sC(\cE_{A\to \sY B})$, we have
\begin{align}
\label{eq:epsmin}
\eps_X(\cE_{A\to \sY B}) 
&=
\begin{aligned}[t]
& \minimum_{T,\lambda,R}
& & \lambda\\
& \text{subject to}
& & T_{\sX A}+\tr_{\sY}[R_{\sX\sY} E_{\sY  A}]\geq Q_{\sX A},\,\,\lambda\id_A -T_{A}\geq 0, \,\,R_{\sY}= \id_\sY,\\
&&& \lambda,T_{\sX A}, R_{\sX\sY} \geq 0\,.
\end{aligned}
\end{align}
Without loss of generality, we may restrict the  operator $T_{\sX A}$ to be  a hybrid classical-quantum operator, classical on $\sX$, and of course $R_{\sX\sY}$ is classical on both systems. 
This is also the reason it is unnecessary to transpose $\sY$ in $\tr_{\sY}[R_{\sX\sY}E_{\sY A}]$. 
Further symmetries of $Q_{\sX A}$ and $E_{\sX A}$ can be quite helpful in simplifying the program, but we will not pursue this further here. 
The associated primal form is as follows.
\begin{align}
\label{eq:epsmax}
\eps_X(\cE_{A\to \sY B}) 
&=
\begin{aligned}[t]
& \maximum_{K,\rho,L}
& & \tr[Q_{\sX A}K_{\sX A}]-\tr[L_\sY]\\
& \text{subject to}
& & K_{\sX A}-\id_\sX\otimes \rho_A\leq 0,\,\, \tr[\rho_A]\leq 1,\,\,\tr_{A}[E_{\sY  A}K_{\sX A}]- L_{\sY}\otimes \id_{\sX}\leq 0,\\
&&& \rho_A,K_{\sX A}\geq 0, L_{\sY}=L_{\sY}^*.
\end{aligned}
\end{align}
In writing an equality we have assumed that the duality gap is zero. 
But this is easy enough to show using the Slater condition, namely by ensuring that the value of the minimization is finite and that there exists a strictly feasible set of maximization variables. 
The former holds because $\eps_X$ is the infinimum of the distinguishability, and hence $\eps_X(\cE)\geq 0$. 
Meanwhile, a strictly feasible set of variables in \eqref{eq:epsmax} is given by $K=\tfrac12 k\id$, $\rho=k\id$, and $L=k E_{\sY}$ for $k<1/\dim(A)$. 

To formulate the measurement disturbance $\nu_Z(\cE_{A\to \sY B})$ we are interested in $\sC(\cE\circ \cR\circ \cT_\sY\circ \cQ_Z)$, which can be expressed as a linear map on $R_{AB\sY}$:
\begin{subequations}
\begin{align}
\sC(\cE\circ \cR\circ \cT_{\sY}\circ\cQ_Z)&=
\tr_{A'\sY B}[Q_{\sZ A'}R_{A'\sY B }^{T_{A'}}E_{\sY B A}^{T_{B}}]\\
&=\tr_{A'\sY B}[R_{A'\sY B}Q_{\sZ A'}^{T_{A'}}E_{\sY B A}^{T_{B}}]\,.
\end{align} 
\end{subequations}
In the second step we have transposed the $A'$ system in the first. 
Then we have
\begin{align}
\label{eq:numin}
\nu_Z(\cE_{A\to \sY B}) 
&=
\begin{aligned}[t]
& \minimum_{T,\lambda,R}
& & \lambda\\
& \text{subject to}
& & T_{\sZ A}+\tr_{A'\sY B}[R_{A'\sY B} Q_{\sZ A'}^{T_{A'}}E_{\sY B A}^{T_{B}}]\geq Q_{\sZ A},\,\,\lambda\id_A -T_{A}\geq 0, \,\,R_{\sY B}=\id_{\sY B},\\
&&& \lambda,T_{\sZ A}, R_{A'\sY B} \geq 0\,,
\end{aligned}\\[2mm]
\label{eq:numax}
%\nu_Z(\cE_{\sY B|A}) 
&=
\begin{aligned}[t]
& \maximum_{K,\rho,L}
& & \tr[Q_{\sZ A}K_{\sZ A}]-\tr[L_{\sY B}]\\
& \text{subject to}
& & K_{\sZ A}- \id_Z{\otimes} \rho_A\leq 0,\,\,\tr[\rho_A]\leq 1,\,\, \tr_{\sZ A}[Q_{\sZ A'}E_{\sY B A}K_{\sZ A}]- \id_{A'}{\otimes} L_{\sY B}\leq 0,\\
&&& \rho_A,K_{\sZ A}\geq 0,L_{\sY B}=L_{\sY B}^*\,.
\end{aligned}
\raisetag{58pt}
\end{align}
Here we have absorbed the transposes over $A'$ and $B$ into $\id_{A'}$ and the definition of $L_{\sY B}$, since this does not affect Hermiticity or the value of the objective function. 
Strong duality is essentially the same as before: The minimization is finite and we can choose $K=\tfrac12 k\id$ and $\rho=k\id$. 
Then in the third constraint we have $\tr_{\sZ A}[Q_{\sZ A'}E_{\sY B A}K_{\sZ A}]=\tfrac12 k\id_{A'}\otimes E_{\sY B}$ since $\cQ_Z$ is unital. 
Setting $L=k E_{\sY B}$ gives a strictly feasible set. 

%Defining $P_{A\sZ}=\sC(\cP_{Z})$, $\sC(\cC)=\sigma_{B\sX}\otimes \id_\sZ$, and $R_{AB\sX}$ for the Choi representation of the recovery map $\cR$, the Choi representative of $\cP_Z\circ \cE$ is simply $\sC(\cP_Z\circ \cE)=\tr_{A}[E_{B\sX A}P_{A\sZ}^{T_A}]$.
%Denoting this operator by $D_{B\sX \sZ}$, we obtain the following expressions for the two preparation disturbance measures. 
Finally, we come to the two preparation disturbance measures. 
The first is simply 
\begin{align}
\label{eq:etamin}
\eta_Z(\cE_{A\to \sY B}) 
&=
\begin{aligned}[t]
& \minimum_{T,\lambda,R}
& & \lambda\\
& \text{subject to}
& & T_{A \sZ}+\tr_{\sY BA'}[R_{A\sY B}  E_{\sY BA'}^{T_{B}} P_{A'\sZ}^{T_{A'}} ]\geq P_{A\sZ},\,\,\lambda \id_\sZ-T_{\sZ}\geq 0, \,\,R_{\sY B}=\id_{\sY B},\\
&&& \lambda,T_{\sX A}, R_{A\sY B} \geq 0\,,
\end{aligned}\\[2mm]
\label{eq:etamax}
%\eta_Z(\cE_{\sY B|A}) 
&=
\begin{aligned}[t]
& \maximum_{K,\rho,L}
& & \tr[P_{A\sZ}K_{A \sZ}]-\tr[L_{\sY B}]\\
& \text{subject to}
& & K_{A\sZ}-\id_A\otimes \rho_\sZ\leq 0,\,\,\tr[\rho_\sZ]\leq 1,\,\, \tr_{A'\sZ}[E_{\sY BA'}P_{A'\sZ}^{T_{A'}}K_{A\sZ}]-\id_A\otimes L_{\sY B}\leq 0,\\
&&& \rho_\sZ,K_{A\sZ}\geq 0,L_{\sY B} =L_{\sY B}^*\,.
\end{aligned}
\raisetag{58pt}
\end{align}
Here we have absorbed the transpose on $B$ into the definition of $L_{\sY B}$ since this doesn't affect Hermiticity or the value of the objective function. 
Strong duality holds as before, and also for the demerit measure which reads %, this time relying on the fact that $\tr_Z [P_{A\sZ}]=\id_A$ for state preparation which prepares states from an orthonormal basis. 
\begin{align}
\label{eq:wideetamin}
\tfrac{d-1}d-\widehat\eta_Z(\cE_{A\to \sY B}) 
&=
\begin{aligned}[t]
& \minimum_{T,\lambda,\sigma}
& & \lambda\\
& \text{subject to}
& & T_{\sY B\sZ}+\sigma_{\sY B}{\otimes} \id_\sZ\geq \tr_A [E_{\sY BA}P_{A\sZ}^{T_A}],\,\,\lambda \id_\sZ-T_{\sZ}\geq 0, \,\,\tr[\sigma_{\sY B}]=1,\\
&&& \lambda,T_{\sY B\sZ}, \sigma_{\sY B} \geq 0\,,
\end{aligned}
\raisetag{58pt}\\[2mm]
\label{eq:wideetamax}
%\tfrac{d-1}d-\widehat\eta_Z(\cE_{\sY B|A}) 
&=
\begin{aligned}[t]
& \maximum_{K,\rho,\mu}
& & \tr[E_{\sY BA}P_{A\sZ}^{T_A}K_{\sY B \sZ}]-\mu\\
& \text{subject to}
& & K_{\sY B\sZ}-\id_{\sY B}\otimes \rho_\sZ\leq 0,\,\,\tr[\rho_\sZ]\leq 1,\,\, K_{\sY B}-\mu\id_{\sY B}\leq 0,\\
&&& \rho_\sZ,K_{\sY B\sZ}\geq 0,\mu\in \mathbb R\,.
\end{aligned}
\end{align}

Now let us consider the particular example described in the main text, a suboptimal $X$ measurement. 
Suppose we use $\ket{\phi_x}$ from the ideal $X$ measurement to define the Choi operator. 
After a bit of calculation, one finds that the Choi operator $E_{\sY BA}$ of $\cE_{\sY B|A}$ is given by 
\begin{align}
E_{\sY B A}=\ketbra{b_0}_\sY\otimes \ketbra{\Psi}_{BA}+\ketbra{b_1}_\sY\otimes (\sigma_z\otimes \sigma_z)\ketbra{\Psi}_{BA}(\sigma_z\otimes \sigma_z),
\end{align}
where $\ket{\Psi}=\cos\tfrac\theta 2\ket{\phi_0}\otimes\ket{\phi_0}+\sin\tfrac\theta 2\ket{\phi_1}\otimes\ket{\phi_1}$.
Tracing out $B$ gives the Choi operator of just the measurement result $\sY$, $E_{\sY A}=\sum_x \ketbra{b_x}_\sY\otimes \Lambda_x$, with $\Lambda_x=\tfrac12\mathbbm{1}+\tfrac12(-1)^x\cos \theta\,\, \sigma_x$. 

To compute the measurement error $\eps_X(\cE)$, suppose that no recovery operation is applied, i.e.\ the outcome $\sY$ is treated as $\sX$. 
Then we can work with $E_{\sX A}$ and dispense with $R$ so that the third constraint in \eqref{eq:epsmin} is satisfied.
To satisfy the first constraint, choose $T_{\sX A}$ to be the positive part of $Q_{\sX A}-E_{\sX A}$.
This gives $T_{\sX A}=\tfrac12(1-\cos\theta)\sum_x \ketbra{b_x}\otimes \ketbra{\varphi_x}$; consequently, $T_{A}=\tfrac12(1-\cos\theta)\id_A$ and therefore $\eps_X(\cE)\leq \tfrac12(1-\cos\theta)$. 
On the other hand, $K_{\sX A}=\tfrac12Q_{\sX A}$ and $\rho_A=\tfrac 12\id_A$ satsify the first two constraints in \eqref{eq:epsmax}. 
The last constraint involves the quantity $\tr_A[E_{\sY A}K_{\sX A}]=\tfrac14\sum_{xy}\ketbra{b_y}_{\sY}\otimes \ketbra{b_x}_{\sX}(1+(-1)^{x+y}\cos\theta)$ and can therefore be satisfied by choosing $L_{\sY}=\tfrac14(1+\cos\theta)\id_{\sY}$. 
Evaluating the objective function gives $\eps_X(\cE)\geq \tfrac12(1-\cos\theta)$. 

Note that the choice of $K_{\sX A}$ corresponds to the unentangled test of randomly inputting $\ket{\varphi_x}$ and checking that the result is $x$. 
We could have anticipated that unentangled tests would be sufficient in this case, since the optimal and actual measurements are both diagonal in the $\sigma_x$ basis: Any input state can be freely dephased in this basis, thus removing any entanglement.

Next, consider the measurement disturbance $\nu_Z(\cE)$. 
Proceeding as with measurement error, suppose that no recovery operation is applied, so that the output $B$ is just regarded as $A'\simeq A$ and the third constraint in \eqref{eq:numin} is trivially satisfied.
For the first constraint we need only the operator 
$\tr_{\sY A'}[Q_{\sZ A'}E_{\sY A' A}^{T_{A'}}]$, and after some calculation we find that it equals $\sum_z \ketbra {b_z}\otimes \Gamma_z$ with $\Gamma_z=\tfrac12(\id+(-1)^z\sin\theta \sigma_z)$. 
Thus, the optimization is just like that of $\eps_X(\cE)$, but with $\cos\theta$ replaced by $\sin\theta$. 
Hence $\nu_Z(\cE)\leq \tfrac 12(1-\sin\theta)$. 
To show the other inequality from the maximization form \eqref{eq:numax} also proceeds as before, starting with $K_{\sZ A}=\tfrac12Q_{\sZ A}$ and $\rho_A=\tfrac 12\id_A$. 
For the third constraint a bit of calculation shows 
\begin{align}
\label{eq:Fcontract}
\tr_{\sZ A}[Q_{\sZ A'}E_{\sY B A}K_{\sZ A}]
&=\tfrac14\sum_{x,z} \ketbra{b_z}_{A'}\otimes \ketbra{b_x}_\sY\otimes (\sigma_x^z\sigma_z^x\ketbra\psi\sigma_z^x\sigma_x^z)_B,
\end{align}
with $\ket{\psi}=\tfrac1{\sqrt2}(\sqrt{1+\sin\theta}\ket{\theta_0}+\sqrt{1-\sin\theta}\ket{\theta_1}$.
Choosing 
\begin{align}
\label{eq:LBX}
L_{\sY B}=\tfrac18\sum_x \ketbra{b_x}_\sY\otimes ((1+\sin\theta)\id+(-1)^x\cos\theta\,\sigma_x)_B
\end{align} 
satisfies the constraints, and the objective function becomes $\tfrac 12(1-\sin\theta)$. 
As with $\eps_X(\cE)$, entangled inputs do not increase the distinguishability in this particular case. 

A trivial recovery map also optimizes $\eta_Z(\cE)$. 
To see this, set $K_{A\sZ}=\tfrac12 P_{A\sZ}$ and $\rho_\sZ=\tfrac12 \id_{\sZ}$. 
Then in the third constraint of \eqref{eq:etamax} we have $\tr_{A'}[E_{\sY BA'} P_{A'\sZ}^{T_{A'}}K_{A\sZ}]$, which is precisely the same as \eqref{eq:Fcontract} with $A'$ replaced by $A$. 
Hence, if we choose $L_{\sY B}$ as in \eqref{eq:LBX}, we obtain the lower bound $\eta_Z(\eps)\geq \tfrac12(1-\sin\theta)$. 
To establish optimality, suppose $\cR$ does nothing but discard the $\sY$ system. 
In the minimization \eqref{eq:etamin} we then have $\tr_{\sY A'}[E_{\sY AA'}P^{T_{A'}}_{A'\sZ}]$, which is the same as $\tr_{\sY A'}[Q_{\sZ A'}E_{\sY A' A}^{T_{A'}}]$ from $\nu_Z(\cE)$. 
Proceeding as there, we find the matching upper bound. 

Finally, consider $\widehat{\eta}_Z(\cE)$. 
Here there are two possible outputs of $\cP_Z\circ\cE$, call them $\xi_0$ and $\xi_1$. 
It is not difficult to show that for arbitrary $\xi_z$ the distinguishability is precisely $\widehat \eta_Z(\cE)=\tfrac12(1-\delta(\xi_0,\xi_1))$.
On the one hand, we can simply pick the output of $\cC$ to be $\xi=\tfrac12(\xi_0+\xi_1)$. 
Then, with $T$ in \eqref{eq:wideetamin} the positive part of $\sum_z \ketbra{z}_{\sZ}\otimes (\xi_z-\xi)$, the objective function becomes $\tfrac12\delta(\xi_0,\xi_1)$. 
On the other hand, in \eqref{eq:wideetamax} we can choose $K_{\sY B\sZ}=\tfrac12\ketbra{0}_\sZ\otimes \Lambda_{\sY B}+\tfrac12\ketbra{1}_\sZ\otimes (\id-\Lambda)_{\sY B}$, for $\Lambda$ the projector onto the nonnegative part of $\xi_0-\xi_1$. 
Then $\mu=\tfrac12$ and $\rho=\tfrac12\id$ are feasible and lead again to the same objective function. 
In this particular case the two states are $\xi_1=\sigma_z\xi_0\sigma_z$ and $\xi_0=\tfrac12\sum_x \ketbra{b_x}_\sY\otimes \sigma_x \ketbra\psi\sigma_x$, which yields $\delta(\xi_0,\xi_1)=\sin\theta$ and hence $\widehat \eta_Z(\cE)=\tfrac12(1-\sin\theta)$.

\section{Counterexample channel}
\label{app:counterexample}
Here we present the calculations involved in \S\ref{sec:appinfodisturb}
Let $\ket{\xi_z}_{BD}=\sum_y \sqrt{p_{yz}}\ket{y}_B\ket y_D$. Then the isometry is just
\begin{align}
V=\sum_z \ket{\xi_z}\ket z_C\bra z_A.
\end{align}
Observe that the action on $\ket{\tilde x}$ states leads to symmetric output in $BC$:
\begin{subequations}
\begin{align}
V\ket{\tilde x}
&=\tfrac1{\sqrt d}\sum_z \omega^{xz}V\ket{z}\\
&=\tfrac1{\sqrt d}\sum_z \omega^{xz}\ket{\xi_z}_{BD}\ket z_C\\
&=Z^x_C\, \tfrac1{\sqrt d}\sum_z \ket{\xi_z}_{BD}\ket z_C.
\end{align}
\end{subequations}
Therefore, the probability of incorrectly identifying any particular input state is the same as any other, and we can consider the case that the input $x$ value is chosen uniformly at random.
We can further simplify the $BC$ output by defining $p_y=\tfrac1d \sum_z p_{yz}$ and 
\begin{align}
\ket{\eta_y}=\tfrac1{\sqrt d}\sum_z \sqrt{p_{yz}/p_y}\ket{z},
\end{align}
which is a normalized state on $\cH_C$ for each $y$. 
Then we have 
\begin{align}
V\ket{\tilde x}=Z^x_C\sum_y \sqrt{p_y}\ket{\eta_y}_C\ket{y}_B\ket y_D.
\end{align}
Ignoring the $D$ system will produce a classical-quantum state, with system $B$ recording the classical value $y$, which occurs with probability $p_y$, and $C$ the quantum state $Z^x\ket{\eta_y}$.
The optimal measurement therefore has elements $\Lambda_x$ of the form $\Lambda_x=\sum_y \ketbra y_B\otimes (\Gamma_{x,y})_C$ for some set of POVMs $\{\Gamma_{x,y}\}_y$. 
In every sector of fixed $y$ value, the measurement has to distinguish between a set of pure states occurring with equal probabilities. 
Therefore, by a result going back to Belavkin, the optimal measurement is the so-called ``pretty good measurement''~\cite{belavkin_optimal_1975,hausladen_`pretty_1994}. 
This has measurement elements $\Gamma_{x,y}$ which project onto the orthonormal states $\ket{\mu_{x,y}}=S^{-1/2}Z^x\ket{\eta_y}$, where $S=\sum_x Z^x\ketbra{\eta_y}Z^{-x}$. 
It is easy to work out that $S=\sum_x (p_{yz}/p_y)\ketbra z$, and thus $\ket{\mu_{x,y}}=\ket{\tilde x}$ for all $y$. 
Hence, we can in fact dispense with the $B$ system altogether, since the particular value of $y$ does not alter the optimal measurement. 
The average guessing probability is thus 
\begin{subequations}
\begin{align}
p_{\rm guess}
&=\tfrac 1d \sum_{x,y}p_y \big|\bra{\tilde x}Z^x\ket{\eta_y}\big|^2\\
&=\sum_y p_y\big|\braket{\tilde 0|\eta_y}\big|^2\\
&=\tfrac 1{d^2}\sum_y \Big(\sum_z \sqrt{p_{yz}}\Big)^2,
\end{align}
\end{subequations}
as intended.

\section{Englert's complementarity relation}
\label{app:englert}
Here we describe Englert's setup in our formalism and establish \eqref{eq:comprelations}. 
He considers a Mach-Zehnder interferometer with a relative phase shift between the two arms and additional which-way detectors in each arm. 
To the two possible paths inside the interferometer we may associate the (orthogonal) eigenstates $\ket{\vartheta_z}$ of an observable $Z$, with $z\in\{0,1\}$. 
For simplicity, we assume $Z$ has eigenvalues $(-1)^z$. The action of a relative $\phi$ phase shift is described by the unitary $U_{\rm PS}=\sum_{z=0}^1e^{iz\phi}\ketbra{\vartheta_z}$. 
It will prove convenient to choose $\phi=0$ below, but we leave it arbitrary for now. 
Meanwhile, the which-way detectors can be described as producing different states of an ancilla system, depending on which path the photon takes. 
For pure ancilla states $\ket{\gamma_z}$, the detector corresponds to the isometry $U_{\rm WW}=\sum_{z=0}^1\ketbra{\vartheta_z}_Q\otimes \ket{\gamma_z}_A$, where $A$ denotes the ancilla and $Q$ the system itself, which Englert terms a ``quanton''. 

Ignoring the phase shifts associated with reflection, the output modes of a symmetric (50/50) beamsplitter are related to the input modes by the unitary $U_{\rm BS}=\sum_{z=0}^1 \ket{\vartheta_z}\bra{\varphi_z}$, with $\ket{\varphi_x}=\tfrac1{\sqrt{2}}\sum_z(-1)^{xz}\ket{\vartheta_z}$ for $x\in\{0,1\}$. 
We may associate these states with the observable $X$, also taking eigenvalues $(-1)^x$. 
Observe that all three complementarity measures are $\tfrac12$. 
The entire Mach-Zehnder device can be described by the isometry
\begin{subequations}
\begin{align}
U_{\rm MZ}&=U_{\rm BS}U_{\rm PS}U_{\rm WW}U_{\rm BS}\\
&=\sum_{x,z=0}^1e^{ix\phi}\ket{\vartheta_z}\braket{\varphi_z|\vartheta_x}\bra{\varphi_x}_Q\otimes \ket{\gamma_x}_A\\
&=\sum_{x=0}^1e^{ix \phi}\ketbra{\varphi_x}_Q\otimes \ket{\gamma_x}_A.
\end{align}
\label{eq:umz}
\end{subequations}
When the ancilla is subsequently measured so as to extract information about the path, we may regard the whole operation as an apparatus $\cE_{\rm MZ}$ with one quantum and one classical output.

The available ``which-way'' information, associated with particle-like behavior of $Q$, is characterized by the distinguishability $\cD:=\delta(\gamma_0,\gamma_1)$. 
Given the particular form of $U$ in \eqref{eq:umz}, we may set $\sin\theta=\braket{\gamma_0|\gamma_1}$ for $\theta\in\mathbb{R}$ without loss of generality; $D$ is then $\cos\theta$. 
This amounts to defining $\ket{\gamma_k}=\cos\frac\theta2\ket{k}+\sin\frac\theta2\ket{k+1}$, where the states $\{\ket{k}\}_{k=0}^1$ form an orthonormal basis and arithmetic inside the ket is modulo two. 
Thus, $\cE_{\text{MZ}}$ with $\phi=0$ is precisely the nonideal qubit $X$ measurement $\cE$ considered in \S\ref{sec:definitions}. 
We shall see momentarily that $\phi=0$ can be chosen without loss of generality. 
Using \eqref{eq:MZerror} we have  $\eps_X(\cE_{\rm MZ})=\tfrac12(1-\cD)$ as claimed.

Meanwhile, the fringe visibility $\cV$ is defined as the difference in probability (or population) in the two output modes of the interferometer, maximized over the choice of input state. 
Since $Z=\ketbra{\vartheta_0}-\ketbra{\vartheta_1}$, this is just
\begin{align}
\cV=\max_\rho \big|\tr[(Z_Q\otimes \id_A)U_{\rm MZ}\rho U_{\rm MZ}^*]\big|\,.
\end{align}
A straightforward calculation yields $U_{\rm MZ}^* (Z_Q\otimes \id_A)U_{\rm MZ}=\sin\theta (\cos\phi\, Z+i\sin\phi\,XZ)$. 
It can be verified that $(\cos\phi\, Z+i\sin\phi\,XZ)$ has eigenvalues $\pm 1$, and therefore $\cV=\sin\theta$. Thus, $\cV^2+\cD^2=1$ in this case (cf.\ \cite[Eq.~11]{englert_fringe_1996}). Note that $\phi$ does not appear in the visiblity itself, justifying our choice of $\phi=0$ above. 
By \eqref{eq:MZdisturbance},  $\nu_Z(\cE_{\rm MZ})=\eta_Z(\cE_{\rm MZ})=\widehat\eta_Z(\cE_{\rm MZ})=\tfrac12(1-V)$. 

\end{document}